\newtheorem{lemma}{Lemma}
\newtheorem{theorem}{Theorem}[section]
\newtheorem{definition}{Definition}[section]
\newtheorem{proposition}[theorem]{Proposition}
\newcommand{\vx}[0]{\mathbf{x}}
\newcommand{\vu}[0]{\mathbf{u}}
\newcommand{\vq}[0]{\mathbf{q}}
\newcommand{\vv}[0]{\mathbf{v}}
\newcommand{\argmin}{\operatornamewithlimits{argmin}}
\newcommand{\rT}[0]{{\textrm{T}}}
\newcommand{\xtraj}[0]{\mathbf{x}_{\tau}}
\newcommand{\utraj}[0]{\mathbf{u}_{\tau}}
\newcommand{\real}[0]{\mathbb{R}}
\newcommand{\domain}[0]{\mathcal{D}}
\newcommand{\safeset}[0]{\mathcal{C}}
\newcommand{\Pb}[0]{\mathbb{P}}
\newcommand{\Qb}[0]{\mathbb{Q}}
\newcommand{\KL}[2]{\mathbb{KL}\left({#1}\parallel {#2}\right)}
\newcommand{\ExP}[2]{\mathbb{E}_{#1}\big[ #2 \big]}
\newcommand{\pluseq}{\mathrel{+}=}
\newcommand{\rd}[0]{\text{d}}
\begin{document}

% paper title
\title{Safe Importance Sampling in Model Predictive Path Integral Control}

% You will get a Paper-ID when submitting a pdf file to the conference system
% \author{Author Names Omitted for Anonymous Review. Paper-ID [344]}

\author{\authorblockN{Manan Gandhi}
\authorblockA{School of Aerospace Engineering\\
Georgia Institute of Technology\\
Atlanta, Georgia 30332--0250\\
Email: mgandhi@gatech.edu}
\and
\authorblockN{Hassan Almubarak}
\authorblockA{School of Electrical \\and Computer Engineering\\
Georgia Institute of Technology\\
Atlanta, Georgia 30332--0250\\
Email: halmubarak@gatech.edu}
\and
\authorblockN{Evangelos Theodorou}
\authorblockA{School of Aerospace Engineering\\
Georgia Institute of Technology\\
Atlanta, Georgia 30332--0250\\
Email: evangelos.theodorou@gatech.edu}}

% avoiding spaces at the end of the author lines is not a problem with
% conference papers because we don't use \thanks or \IEEEmembership

% for over three affiliations, or if they all won't fit within the width
% of the page, use this alternative format:
% 
%\author{\authorblockN{Michael Shell\authorrefmark{1},
%Homer Simpson\authorrefmark{2},
%James Kirk\authorrefmark{3}, 
%Montgomery Scott\authorrefmark{3} and
%Eldon Tyrell\authorrefmark{4}}
%\authorblockA{\authorrefmark{1}School of Electrical and Computer Engineering\\
%Georgia Institute of Technology,
%Atlanta, Georgia 30332--0250\\ Email: mshell@ece.gatech.edu}
%\authorblockA{\authorrefmark{2}Twentieth Century Fox, Springfield, USA\\
%Email: homer@thesimpsons.com}
%\authorblockA{\authorrefmark{3}Starfleet Academy, San Francisco, California 96678-2391\\
%Telephone: (800) 555--1212, Fax: (888) 555--1212}
%\authorblockA{\authorrefmark{4}Tyrell Inc., 123 Replicant Street, Los Angeles, California 90210--4321}}

\maketitle

\begin{abstract}
We introduce the notion of importance sampling under embedded barrier state control, titled \ac{SC-MPPI}. For robotic systems operating in an environment with multiple constraints, hard constraints are often encoded utilizing penalty functions when performing optimization. Alternative schemes utilizing optimization-based techniques, such as Control Barrier Functions, can be used as a safety filter to ensure the system does not violate the given hard constraints. In contrast, this work leverages the principle of a safety filter but applies it during forward sampling for Model Predictive Path Integral Control. The resulting set of forward samples can remain safe within the domain of the safety controller, increasing sample efficiency and allowing for improved exploration of the state space. We derive this controller through information theoretic principles analogous to Information Theoretic MPPI. We empirically demonstrate both superior sample efficiency, exploration, and system performance of \ac{SC-MPPI} when compared to \ac{MPPI} and \ac{DDP} optimizing the barrier state.
\end{abstract}

\IEEEpeerreviewmaketitle

\section{Introduction}
Safety-critical control is a fundamental problem in dynamical systems with many problems in robotics, healthcare, and aviation requiring safe operation. In the field of terrestrial and aerial agility, sampling-based control \cite{williams2017information, foehn2022alphapilot} has been utilized to achieve high performing, aggressive control structures, however hard constraints for these systems is often implemented in terms of a safety filter or as penalty functions in the optimization. In this work we will present \ac{SC-MPPI}, an algorithm to embed safety into the sampling phase of \ac{MPPI}. 

\begin{figure}[t] 
    \centering
  \subfloat[]{\includegraphics[trim={5cm 1cm 5cm 3cm},clip,width=0.45\linewidth]{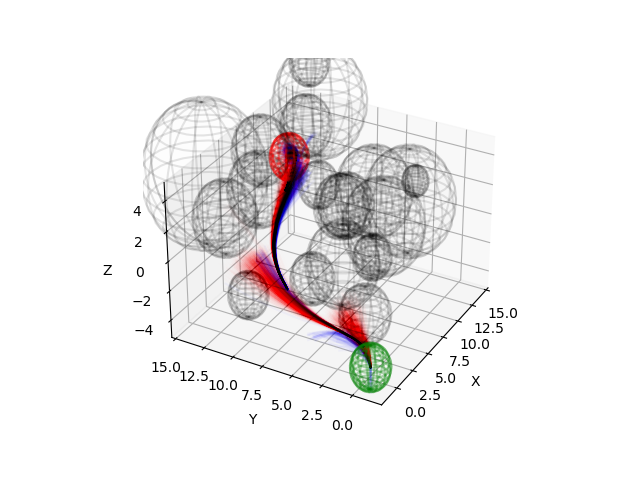}}
  \subfloat[]{\includegraphics[trim={5cm 1cm 5cm 3cm},clip,width=0.45\linewidth]{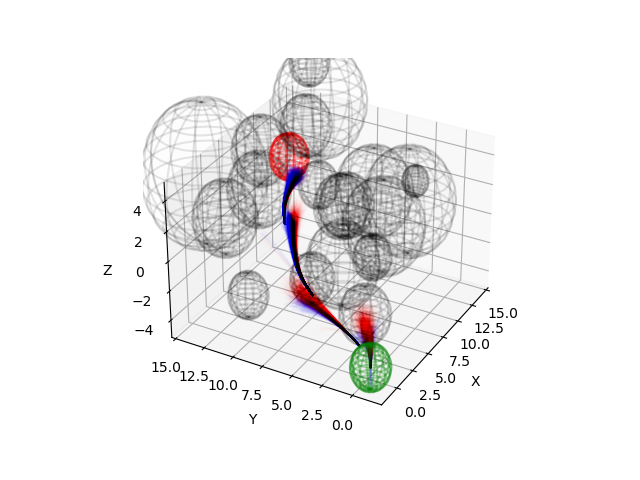}}
  \caption{Quadrotor in a dense obstacle field with visualization of the sampling. Safe samples are shown in blue, and unsafe samples are shown in red. The green sphere is the initial position, the red sphere is the desired final position. The left figure is \ac{SC-MPPI} shown with obstacles. The right figure is \ac{MPPI}. Note that \ac{MPPI} trajectories are closer to the nominal trajectory, while \ac{SC-MPPI} trajectories project further forward while moving away from the obstacles.}
  \label{fig: quadrotor sampling} 
\end{figure}

In the context of safe control, we review some candidate techniques for maintaining the safety of a known system model through feedback. This feedback forms the lynch pin of \ac{SC-MPPI}, as the feedback mechanism in question will be applied on each sample individually (see \Cref{fig: quadrotor sampling}). Potential field methods utilize attractive forces for a given goal state and repulsive forces generated by obstacles. The difficulty in potential field methods lies in optimizing the ratio between the two forces, additionally, the method itself is subject to becoming trapped before narrow passages, as demonstrated in \citet{koren1991potential}.  \citet{singletary2020comparative} demonstrate that potential functions are a subset of Control Barrier Functions, whereas Control Barrier Functions can be utilized in a more general fashion as a safety filter. \ac{CBFs} have been widely successful in tasks such as bipedal walking \cite{agrawal2017discrete, ames2019control} and automatic lane keeping \cite{ames2016control}, however for complex constraints, the control is the result of an optimization scheme that is computationally expensive to run in real-time thousands of samples. Additionally, a particular structure is required to handle systems with high relative degree \cite{xiao2019control} and in the context of discrete \ac{CBFs}, even linear systems with linear constraints can result in a non-convex optimization problem \cite{agrawal2017discrete}. Even in the realm of sampling-based control, \ac{CBFs} can offer a useful method to improve sample efficiency, but is held back again due to computational complexity \cite{tao2021control}. Embedded barrier state (BaS) methods developed by \citet{almubarak2022safety} append the state space with the state of a barrier function, then utilize a Lyapunov stability criterion, satisfied by an optimal controller, to ensure safety through stabilization of the appended model. Through similar arguments, \citet{almubarak2022safeddp} propose using discrete barrier states in discrete time trajectory optimization settings and show that the safety embedding technique exhibits great performance improvements over penalty methods which, implicitly, result in similar cost functions. This method is applicable to general nonlinear systems, and can combine a variety of nonlinear constraints at the expense of sensitivity to gradients of the barrier state dynamics, time discretization, and increasing the model's dimension. A great advantage of designing a feedback controller for the embedded system is the fact that the controller now is a function of the barrier. In the proposed work, we utilize discrete embedded barrier state feedback control as a means of safety in sampling, due to relaxation in problem assumptions, computational complexity, and ability to combine a large number of disjoint, unsafe regions.

Safety in sampling based model predictive control has received recent attention from \citet{martin2019trustregions} and \citet{zeng2021safety}, due to the high performance capabilities of sampling-based MPC \cite{williams2017information}. Additionally, sampling based trajectory optimization shares a close connection with safe model-based reinforcement learning, which can be loosely categorized into two categories: safety within the optimization metric, and safety in sample exploration by \citet{garcia2015comprehensive}. Safety in importance sampling falls into the latter category, in the same vein as the work by \citet{berkenkamp2017safe} which utilizes Lyapunov-based stability metrics to explore the policy space, the work by \citet{thananjeyan2021abc}, who utilize sampling to iteratively improve the policy for a nonlinear system. Additionally, there is the work in covariance steering control in combination with \ac{MPPI} by \citet{yin2022trajectory}. In this work, the authors formulate a convex optimization problem to solve for a feedback control which satisfies an upper bound on the covariance of the terminal states. While in CC-MPPI, a feedback controller is utilized to guide trajectory samples away from high cost areas, the mechanism in the proposed work relies of feedback from the barrier state, enabling safety without having a soft cost on obstacles. CC-MPPI is also computationally intensive, with the controller given in \cite{yin2022trajectory} running at 13 Hz, due to the complexity of solving the covariance steering problem around the given reference trajectory. In our work, the safety embedded controller is utilized to \textit{both modify the reference trajectory} (if needed), \textit{and compute feedback gains} on a barrier state in order to establish safety, all while maintaining \textit{optimization times below 10 ms}.

There is also the use of constrained covariance steering in Tube-Based MPPI by \citet{9867864}, where the authors apply a constrained covariance steering controller as a probabilistic safety filter on top of MPPI. This is again fundamentally different from our work, since safety is not applied during the sampling phase. In the attempt to unify the safety critical control and sampling-based MPC, we develop a new algorithm based on information theoretic MPC that encodes knowledge of the safe controller into the forward sampling. We show \textbf{three key contributions} in this work:

\begin{enumerate}
    \item Derivation of a new control scheme for embedding safety into MPPI.
    \item Empirical results of both computational efficiency (real-time performance), and sample efficiency (\% of collision free samples).
    \item Superior performance of the proposed algorithm versus existing methods in a navigation task in a cluttered environment, with respect to vehicle speed, task completion and final position error.
\end{enumerate}

\section{Mathematical Background}
For the background of this work, we will first review the problem at hand, and then describe the fundamentals of our choice of safety controller: Differential Dynamic Programming with \textit{Embedded Barrier States} by \citet{almubarak2022safeddp}. The decision to utilize Embedded Barrier States for safety was due to the flexibility and computational ease of the framework. Alternative safety schemes can be utilized under this same proposed framework, typically with a significant increase in computational complexity. Next we will review the fundamentals of Information Theoretic Path Integral Control, then delve into the proposed algorithm.

Consider the discrete, nonlinear dynamical system:
\begin{align}
    \vx_{k+1} &= F(k, \vx_k, \vu_k ), \label{eq: dynamics}
\end{align}
where, at a time step $k \in \real^+$, $\vx \in \domain \subset \real^n$, $\vu \in \real^m$, and $F: (\real^+ \times \real^n \times \real^m) \rightarrow \real^n$. Within the domain $\domain$, we can define a \textit{safe set} $\safeset \subset \domain$. The goal of the model predictive controller will be to achieve an objective in finite time while keeping the state trajectory, $\xtraj=\{ \vx_0, \vx_1, ..., \vx_T \}$ inside the safe set $\safeset$. Specifically, the model predictive control problem considers minimizing the cost functional % $J$ in \Cref{eq: generic cost}:
\begin{align}
    J(\xtraj, \utraj) &= \phi(\vx_T) + \sum_{k=0}^{T-1} \big( q(\vx_k, k) + \lambda \vu^{\rT} \Sigma^{-1} \vu \big), \label{eq: generic cost}
\end{align}
subject to safety state constraints and control limits with $\utraj=\{ \vu_0, \vu_1, ..., \vu_{T-1} \}$. Here $\phi$ is a terminal cost, and $q$ is a nonlinear, potentially time varying state cost with a quadratic control cost penalty. $\lambda$ is known as the inverse temperature and $\Sigma \in \real^{m \times m}$ is a positive definite control penalty matrix. For the safety state constraints, we consider the superlevel set $\safeset \subset \real^n$ defined by a continuously differentiable real valued function $h: \domain\subset \mathbb{R}^n \rightarrow \mathbb{R}$ such that the set $\safeset$, its interior $\safeset^{\circ}$, and boundary $\partial \safeset$ are defined respectively as
\begin{align}  \label{eq: control barrier safe set} 
\begin{split}
    \safeset &= \{\vx  \in \domain: h(\vx) \geq 0\},\\
    \safeset^{\circ} &= \{\vx  \in \domain: h(\vx) > 0\},\\
    \partial \safeset &= \{\vx  \in \domain: h(\vx) = 0\}.
\end{split}
\end{align}
To ensure safety, the safe set $\safeset$ needs to be rendered controlled forward invariant, i.e. the system's safety critical states never leave the set $\safeset$. The notion of \textit{forward invariance} of the safe set $\safeset$ with respect to the safety-critical dynamical system \eqref{eq: dynamics} can be formally defined as follows \cite{blanchini1999set}:
\begin{definition} The set $\safeset \subset \real^n$ is said to be \textit{controlled forward invariant} for the dynamical system $\vx_{k+1} = F(k, \vx_k, \vu_k )$, if, for all $x_0 \in \safeset$, there exists a feedback control $\vu_k = K_{\text{safe}}(k, \vx_k)$, such that $\vx_{k+1} = F(k, \vx_k, K_{\text{safe}}(k, \vx_k)) \in \safeset$, for all $k \in [t_0, T]$.
\end{definition}

\subsection{Embedded Barrier States} \label{subsec: embedded barrier states}
Consider the undisturbed system in \eqref{eq: dynamics} and the safety constraint \eqref{eq: control barrier safe set}. A smooth scalar valued function $B:\safeset^{\circ} \rightarrow \mathbb{R}$ defined over $h$, $B\big(h(\vx_k)\big)$, is a barrier function if $B \rightarrow \infty $ as $\vx_k \rightarrow \partial\safeset$. In control barrier functions \cite{prajna2004safety,wieland2007constructive,romdlony2016stabilization,ames2016control}, to ensure boundedness of the barrier function, which implies satisfaction of the safety condition $h(\vx_x)>0 $, the barrier's rate of change is required to be decreasing (or not exceed a certain value associated with the barrier function as proposed in \cite{ames2016control}). The authors in \cite{almubarak2022safety} proposed \textit{barrier states} (BaS) to transform the safety objective into a performance objective by embedding the state of the barrier into the model of the safety-critical system. In essence, the safe control synthesis is coupled with the control design of other performance objectives. The idea is that the barrier function's rate of change is controlled along the system's state to ensure its boundedness in lieu of enforcing this rate through an inequality hard constraint as in CBFs. In \cite{almubarak2022safety}, the barrier state embedded model is asymptotically stabilized, which implies safety due to boundedness of the barrier state (see \cite{almubarak2022safety} Theorem 3). Next, the authors proposed discrete barrier states (DBaS) with differential dynamic programming to perform safe trajectory optimization \cite{almubarak2022safeddp}, which was shown to greatly simplified the problem formulation. 

Defining the barrier over the safety condition as $\beta_k:= B\big(h(\vx_k)\big)$, the DBaS dynamics are defined as
\begin{align} \label{eq: discrete barrier state}
    F^{\beta} (\vx_k, \beta_k, \vu_k) &:= \beta_{k+1} \\
                                      &= B \circ h \circ F(k, \vx_k, \vu_k ) \nonumber \\
                                      &- \gamma (\beta_k - B \circ h \circ \vx_k). \nonumber
\end{align}
The additional term $\gamma (\beta_k - B \circ h \circ \vx_k)$, parameterized by $ |\gamma| \leq 1$, is the DBaS pole for the linearized system. This then guarantees a non-vanishing gradient of the barrier, ensuring a non-zero feedback. Notice that this term is essentially zero by definition of the barrier (see the detailed proof of the continuous time case in \cite{almubarak2022safety}). %\textcolor{red}{(generate a gradient between the the barrier state $\beta$ and the controls $\vu_k$).} 
Details and a numerical example of the barrier states feedback, as well as how it appears in differential dynamic programming (DDP) are provided in Appendix \ref{App subsec: DBaS in trajectory optimization}. This algorithm is used to generate feedback gains for importance sampling in Section \ref{sec: Safe Information Theoretic Model Predictive Control }. For multiple constraints, multiple barrier functions can be added to form a single barrier \cite{almubarak2022safeddp} or multiple barrier states. Then, the barrier state vector $\beta \in \mathcal{B} \subset \mathbb{R}^{n_\beta}$, where $n_\beta$ is the dimensionality of the barrier state vector, is appended to the dynamical model resulting in the safety embedded system:
\begin{equation} \label{eq: safety embedded model}
    \bar{\vx}_{k+1} = \bar{F}(k, \bar{\vx}_k, \vu_k ), 
\end{equation}
where $\bar{F} = \begin{bmatrix}F(k,\vx_k, \vu_k), & F^{\beta}\end{bmatrix}^\text{T}$ and $\bar{\vx} = \begin{bmatrix} \vx, & \beta\end{bmatrix}^\text{T}$.

One of the benefits of a safety embedded model is the direct transmission of safety constraint information to the optimal controller (see Appendix \ref{App subsec: DBaS in trajectory optimization}). This prevents two separate algorithms from \textit{fighting} one another for control bandwidth, i.e. a controller attempting to maximize performance and a safety filter attempting to maximize safety. This comes at a cost of the user having to specify the weighting between task performance and safety, similar to barrier methods in optimization. For the model predictive control problem in this work, the following proposition \cite{almubarak2022safeddp} depicts the safety guarantees provided by the embedded barrier state method.

\begin{proposition}[\cite{almubarak2022safeddp}] \label{prop:safety}
Under the control sequence $\utraj$, the safe set $\safeset$ is controlled forward invariant if and only if $\beta(\vx(0)) < \infty \Rightarrow \beta_k <\infty \ \forall k \in [1, T]$.
\end{proposition}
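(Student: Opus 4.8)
The plan is to reduce the claim to a single structural identity: along any rollout of the embedded system \eqref{eq: safety embedded model} driven by $\utraj$ and initialized consistently with $\beta_0 = B\circ h\circ \vx_0$, the barrier state coincides with the evaluated barrier, i.e.\ $\beta_k = B\big(h(\vx_k)\big)$ for all $k\in[0,T]$. I would establish this by induction on $k$: the base case is exactly the initialization, and for the inductive step the DBaS dynamics \eqref{eq: discrete barrier state} read $\beta_{k+1} = B\circ h\circ F(k,\vx_k,\vu_k) - \gamma\big(\beta_k - B\circ h\circ\vx_k\big)$, where the correction term vanishes precisely because $\beta_k = B\big(h(\vx_k)\big)$ by the inductive hypothesis, leaving $\beta_{k+1} = B\big(h(\vx_{k+1})\big)$. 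This is the discrete-time counterpart of the cancellation noted immediately after \eqref{eq: discrete barrier state}.

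With that identity in hand, both directions follow from the defining property of the barrier $B\colon\safeset^\circ\to\real$, namely $B\to\infty$ as $\vx\to\partial\safeset$, together with continuity of $h$ and $B$ on the interior, which together give the chain of equivalences $\beta_k<\infty \iff B\big(h(\vx_k)\big)<\infty \iff h(\vx_k)>0 \iff \vx_k\in\safeset^\circ$. For the ``if'' direction, assume $\beta_0<\infty$ (equivalently $\vx_0\in\safeset^\circ$) implies $\beta_k<\infty$ for all $k\in[1,T]$; the equivalence then yields $\vx_k\in\safeset^\circ\subset\safeset$ for every $k$, and since this holds for the given rollout, the control sequence $\utraj$ itself witnesses that $\safeset$ is controlled forward invariant in the sense of the definition stated above. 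For the ``only if'' direction, controlled forward invariance under $\utraj$ means $\vx_{k+1}=F(k,\vx_k,\vu_k)\in\safeset$ for all $k$; combined with $\vx_0\in\safeset^\circ$, the same equivalence returns $\beta_k = B\big(h(\vx_k)\big)<\infty$ for all $k$, which is the implication $\beta_0<\infty\Rightarrow\beta_k<\infty$.

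The main obstacle I anticipate is the careful handling of the boundary $\partial\safeset$: $B$ is finite only on the open set $\safeset^\circ$, whereas the invariance definition is stated for the closed set $\safeset$, so the ``only if'' direction needs an argument that a trajectory starting in the interior and remaining in $\safeset$ in fact remains in $\safeset^\circ$ (otherwise $\beta_k$ could be $+\infty$ at a boundary contact even with $\vx_k\in\safeset$). I would close this gap either by working throughout with $\safeset^\circ$ and noting that $\beta_0<\infty$ already forces $\vx_0\in\safeset^\circ$, or by a short minimality/contradiction argument: letting $j$ be the first index with $\vx_j\in\partial\safeset$ gives $\beta_j=\infty$, contradicting boundedness, while conversely touching the boundary is the only way a trajectory confined to $\safeset$ can make $\beta$ unbounded. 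A secondary point worth stating explicitly is that the equivalence concerns the \emph{particular} rollout generated by $\utraj$ with the barrier state initialized consistently, so no existential quantifier over feedback laws is actually required beyond exhibiting $\utraj$ itself.
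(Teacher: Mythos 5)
Your argument is correct and is essentially the proof this paper relies on: the proposition is stated here without proof, deferring to \cite{almubarak2022safeddp}, and that proof runs exactly as yours does, via the inductive identity $\beta_k = B\big(h(\vx_k)\big)$ along the rollout (the vanishing of the $\gamma$-correction term, which the paper itself notes immediately after \eqref{eq: discrete barrier state}) combined with the blow-up property $B\to\infty$ as $\vx\to\partial\safeset$, so that finiteness of the barrier state is equivalent to the trajectory remaining in the safe region. The boundary caveat you raise is resolved precisely in the way you propose: in this framework safety is understood as $h(\vx_k)>0$ (boundedness of the barrier ``implies satisfaction of the safety condition $h>0$''), so the equivalence is read with respect to $\safeset^{\circ}$ and the particular rollout generated by $\utraj$, matching your interpretation.
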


For the optimal control problem considered in this work \Cref{eq: dynamics}, \Cref{eq: generic cost}, with the constraints in \Cref{eq: control barrier safe set}, the embedded barrier state paradigm transforms the problem to the following:
\begin{align} \begin{split}
    \min_{\utraj}  \sum_{k=0}^{T-1} & \big( q(\bar{\vx}_k, k) + \vu^{\rT} \Sigma^{-1} \vu \big) + \phi(\bar{\vx}_T) \\
    \text{subject to } & \bar{\vx}_{k+1} = \bar{F}(k, \bar{\vx}_k, \vu_k )
\end{split} \label{eq: bas optimal control problem} \end{align}
This transformation was first proposed in \cite{almubarak2022safeddp} and used in \cite{cho2023MPC-DBaS_DDP} with MPC for safe driving of autonomous vehicles. Note that in this work, we use embedded barrier state DDP to generate safe reference trajectories and safe feedback gains along the reference trajectories. The safe feedback gains are applied on the barrier state $\beta$ to guide samples for \ac{MPPI} away from obstacles.

\subsection{Model Predictive Path Integral Control}
We briefly review Free Energy, Relative Entropy, and the connection to model predictive control. Additional details can be found in \cite{williams2017information}. First, we define free energy with the following expression:
\begin{align}
\mathcal{F}(S, \mathbb{P}, \vx_0, \lambda) &= -\lambda \log \Bigg[ \mathbb{E}_{\mathbb{P}}(\exp(-\frac{1}{\lambda} S(V)) \Bigg], \label{eq: free energy} \\
    S(V, \vx_0) &= \phi(\vx_T) + \sum_{t = 0}^{T-1} q(\vx_t) \label{eq:pathcost}.
\end{align}
 $\lambda$ is the inverse temperature. $S$ is the cost-to-go function, which takes in initial condition $\vx_0$ and set of random variables that generate a sequence of controls $V = \{ \vv_0, \vv_1, ..., \vv_{T-1} \}$, which in turn generate a trajectory $\xtraj$ evaluated with the terminal and state cost functions $\phi$ and $q$ respectively. The probability measure $\Pb$ is utilized to sample the controls of the system when computing the free energy.

We can upper bound the free energy using Jenson's Inequality,
\begin{align}
    \text{KL}(\mathbb{Q} || \mathbb{P}) &= \mathbb{E}_{\mathbb{Q}}\Big[ \log (\frac{\text{d}\mathbb{Q}}{\text{d}\mathbb{P}}) \Big], \\
    \mathcal{F}(S, \mathbb{P}, \vx_0, \lambda) &\leq \mathbb{E}_{\mathbb{Q}}\Big[ S(V) \Big] + \lambda \text{KL}(\mathbb{Q} || \mathbb{P}). \label{eq: free energy relative entropy}
\end{align}

Equation \eqref{eq: free energy relative entropy} now represents an optimization problem. 
Assume $\Qb^*$ is the optimal control distribution, and when the free energy is computed with respect to $S$ and $\Qb^*$, the free energy is minimized. This optimal free energy is upper bounded by the free energy computed from another distribution $\Qb$, summed with the KL-divergence between $\Qb$ and $\Qb^*$. In practice, MPPI assumes a form of the optimal distribution $\Qb^*$, which cannot be directly sampled from. Instead the KL-divergence term is utilized as an information theoretic metric that is used to drive a controlled distribution $\Qb$ closer to the optimal:

\begin{align}
    U^* = \argmin_U \text{KL}(Q^*||Q).
    \label{eq:control_obj_MPPI}
\end{align}

The authors in \cite{williams2017information} show that the solution to \eqref{eq:control_obj_MPPI} is equivalent to solving $\vu_t^* = \int q^*(V) \vv_t dV$. Where
\begin{align*}
    q^*(V) &= \frac{1}{Z}\exp \big( -\frac{1}{\lambda} S(V, \vx_0) \big) p(V)
\end{align*}

MPPI utilizes iterative importance sampling to approximate samples from the optimal distribution.

\begin{align}
    \vu_t^* &= \ExP{\Qb^*}{\vv_t}  \nonumber \\ 
    &= \ExP{\Qb}{\vv_t \frac{\rd \Qb^*}{\rd \Pb} \frac{\rd \Pb}{\rd \Qb} } \nonumber \\
    &= \ExP{\Qb}{\vv_t \exp \big( -\frac{1}{\lambda} S(V, \vx_0) \big) \frac{\rd \Pb}{\rd \Qb} } \nonumber
\end{align}
\textbf{Remark.} At this point, an obvious step would be to implement MPPI with the safety embedded dynamics \eqref{eq: safety embedded model}. This results in the Information Theoretic MPPI algorithm applied to the embedded barrier state control problem in \Cref{eq: bas optimal control problem}. The barrier state must be explicitly penalized in the cost function. This formulation allows MPPI to combine the optimization of task performance with safety, however, as we will see in \Cref{section: results}, some limitations exist. In general, there are scenarios where we lose the ability to explore if the dynamics are too close to an obstacle or in an undesirable region of state space.

In the next section, we go a step further and apply the safe controller to the importance sampling step. This has the effect of adding feedback with respect to the barrier state of the system, pushing samples away from unsafe regions. Additionally, this step circumvents the need to find tuning parameters to weight the cost of the barrier state versus the cost of the trajectory. The burden of solving the safe control problem falls onto a sub-optimization step that is purely focused on safety, while the MPC controller is tuned for performance.

% \begin{equation} \label{eq: cost for embedded model}
% \begin{split}
%     J(\bar{\xtraj}, \utraj) &= \phi(\bar{\vx}_T) + \sum_{k=0}^{T-1} \big( q(\bar{\vx}_k, k) + \lambda \vu^{\rT} \Sigma^{-1} \vu \big) \\
%     \text{s.t.}   \  &\bar{\vx}_{k+1} = \bar{F}(k, \bar{\vx}_k, \vu_k ) 
% \end{split} \end{equation}

\section{Safe Information Theoretic Model Predictive Control} \label{sec: Safe Information Theoretic Model Predictive Control }
\subsection{Safe Information Theoretic Measure} 

In this section, we will re-derive Information Theoretic Model Predictive Control with an alternative definition of the state-to-path cost function. This outline closely follows the derivation from \cite{williams2017information}. First, we define the state-to-path-cost function as
\begin{align}
\label{eq: safe path cost} S(V, \vx_0) = 
\begin{cases}
\phi(\vx_T) + \sum_{t = 0}^{T-1} q(\vx_t),& \vx \in \safeset, \\
\infty,& \vx \in \mathcal{C}^\text{C}.
\end{cases}
\end{align}
This cost function is applied to the following system,
\begin{align}
    \begin{bmatrix}
    \vx_{k+1} \\ \beta_{k+1} \end{bmatrix}
    &= 
    \begin{bmatrix}
    F(k, \vx_k, \vu_k + K_{\text{BaS}} \cdot \beta_k) \\ F^{\beta} (\vx_k, \beta_k, \vu_k+ K_{\text{BaS}} \cdot \beta_k)
    \end{bmatrix},
\end{align}
with \textit{a safety controller applied during the importance sampling}, see \Cref{fig: Quadrotor samples with feedback zoom}.
% \begin{figure}[h]
%     \centering
%     \includegraphics[trim={3cm 3cm 3cm 3.5cm},clip, width=\linewidth]{} 
%     \caption{The proposed importance sampling scheme, with a \textit{potentially unsafe} reference trajectory in red, a corrected safe trajectory in green, and blue safe samples under barrier state feedback. Note how the samples attempt to curve around the obstacle in question (magenta).}
%     \label{fig: Quadrotor samples with feedback zoom}
%     % \vspace{-5mm}
% \end{figure}

\begin{figure*} [th!]
    \centering
 \hspace{0mm}
  \subfloat[Unsafe Reference]{\includegraphics[trim={0 0 0 50},clip,width=0.30\linewidth]{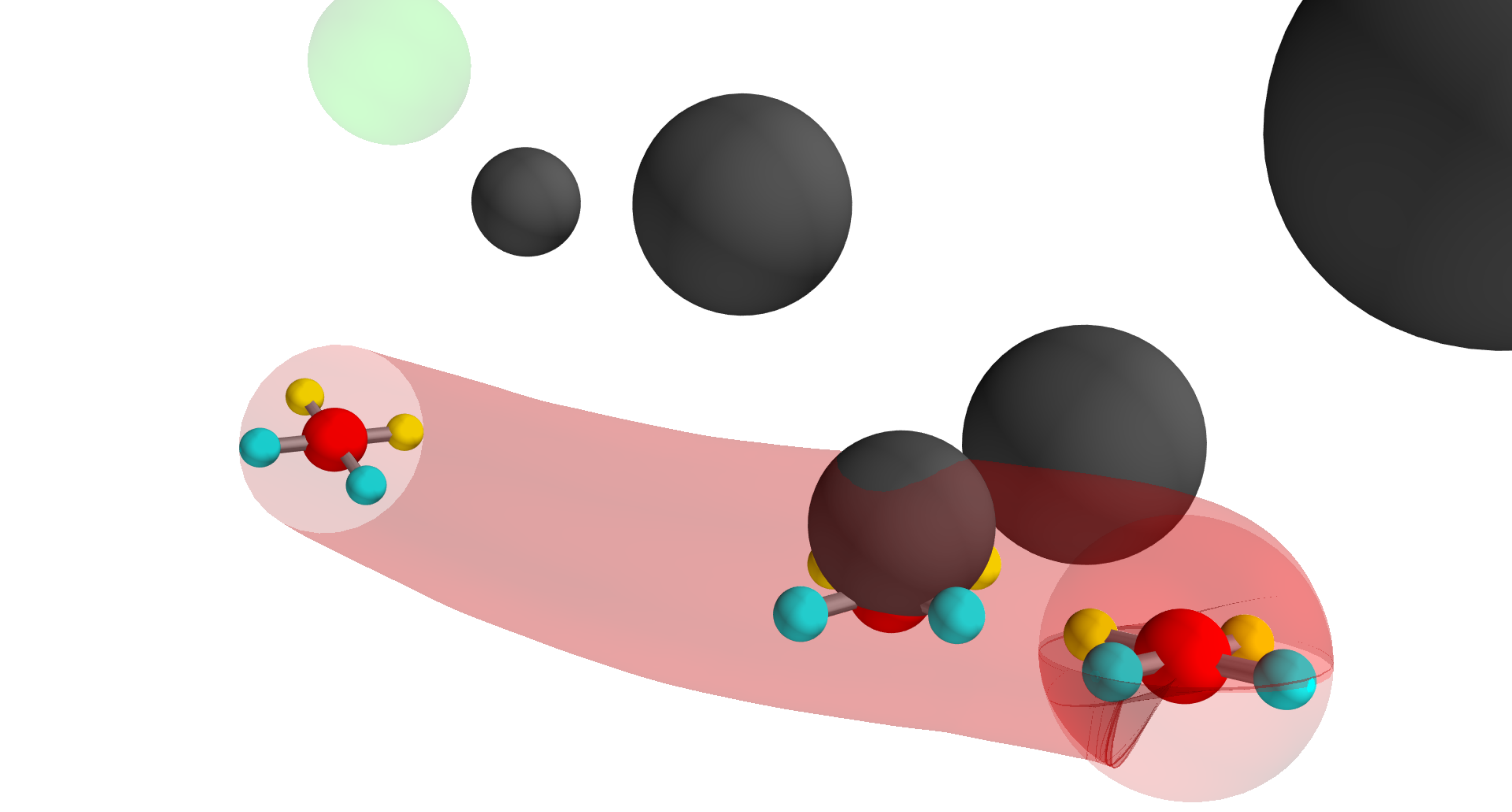}} \hspace{5mm}
    \subfloat[Corrected Reference]{\includegraphics[trim={0 0 0 120},clip,width=0.30\linewidth]{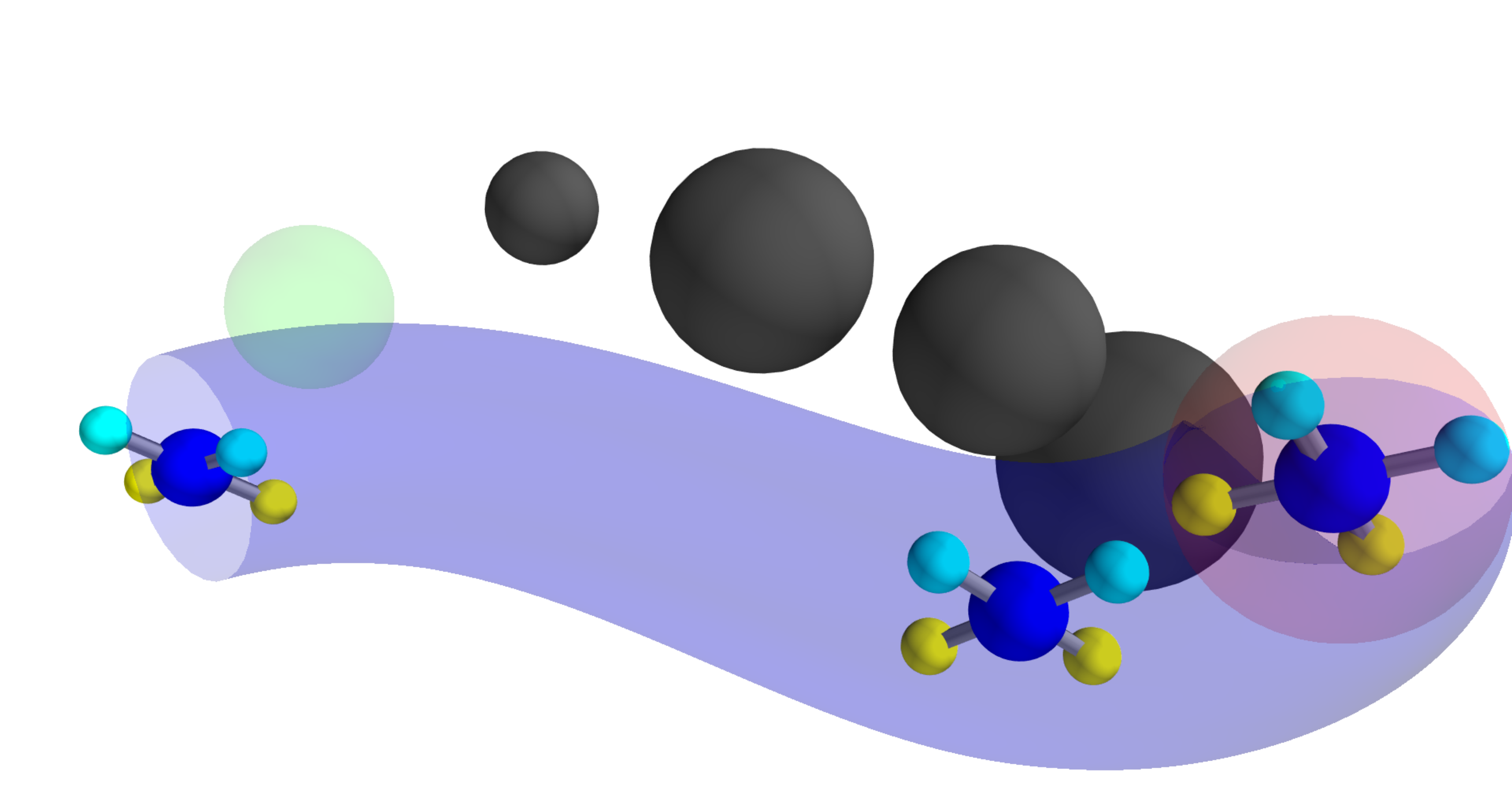}} \hspace{5mm}
  \subfloat[Samples under Barrier State Feedback]{\includegraphics[trim={0 0 0 50},clip,width=0.30\linewidth]{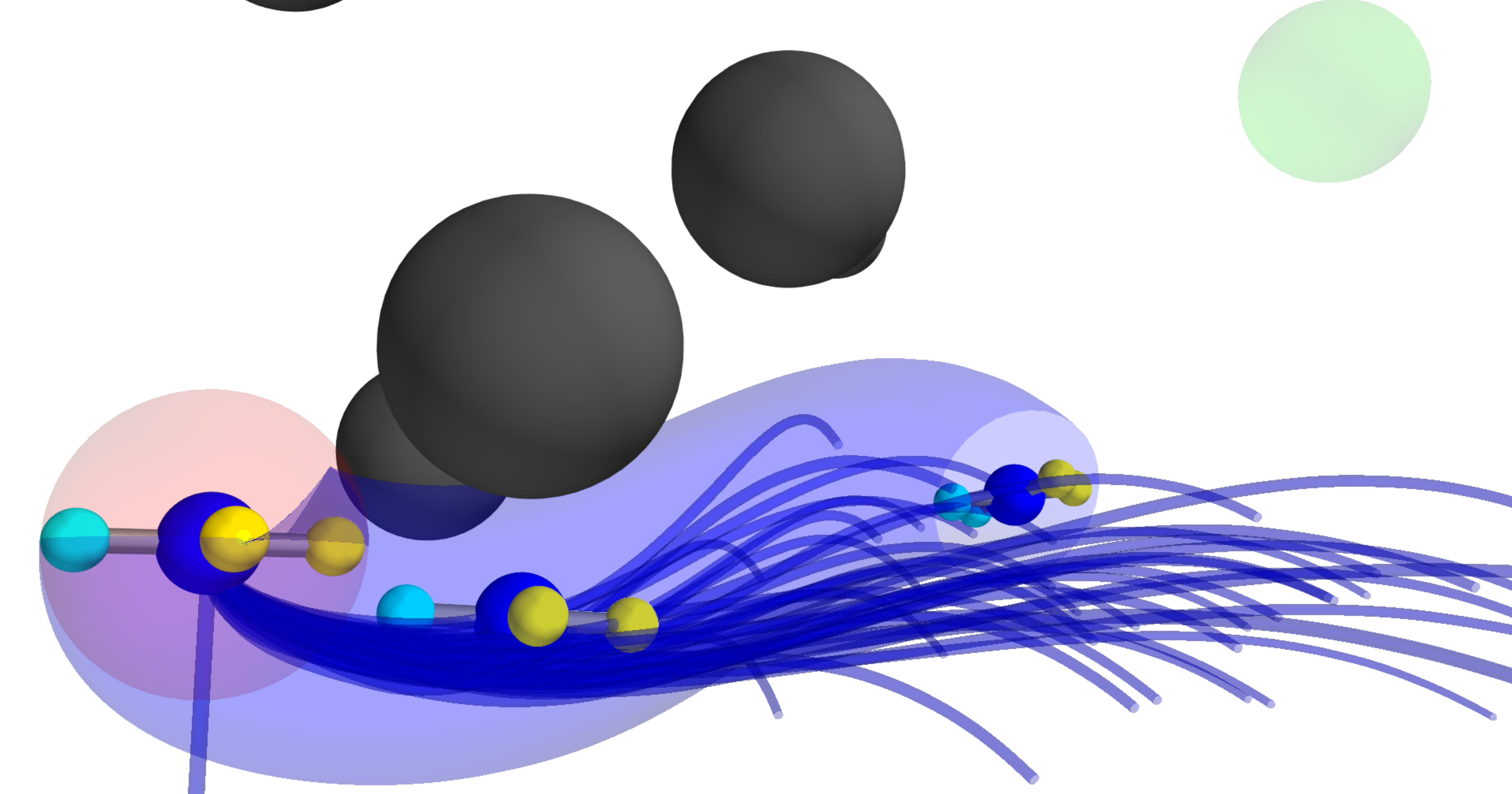}}
  \caption{The proposed importance sampling scheme, with a red unsafe reference trajectory in (a), a  blue corrected safe trajectory shown in (b), and blue safe samples under barrier state feedback shown in (c). Note how the samples attempt to curve around the obstacles.}
  \label{fig: Quadrotor samples with feedback zoom} 
\end{figure*}

\noindent \textbf{Remark} The state-to-path-cost function \eqref{eq: safe path cost} is not a function of $\beta$, and does not penalize the barrier state in any way. This separation of safety and performance enables the user to first tune an appropriate safety controller for a given tracking task, then design a path cost for the overall problem. The free energy under this state-to-path-cost is potentially infinite if the set of controls $V$ drives the system into the unsafe region $\mathcal{C}^C$.

Take note that the expectation in \eqref{eq: free energy} is a \textit{conditional expectation}, in this case being conditioned on the initial state $\vx_0$. We can split the measure $\Pb$ into two disjoint measures, one with control samples, that when combined with an initial condition, result in trajectories that are forward-invariant in $\safeset$ and another with trajectories that enter the unsafe region. In other words, the safe measure is parameterized by mean $\vu + K_{\text{BaS}} \cdot \beta$.
\begin{align}
 \Pb = \Pb_S \cup \Pb_U, \\
 \Pb_S \cap \Pb_U = \emptyset.
\end{align}
Using the additivity property of measures, we can then split the free energy into two terms. Using the fact that $S(V, \vx_0) = \infty$ when $V$ is sampled from $\Pb_U$, we see that the unsafe term goes to zero, since $\exp(-\infty) = 0$.
\begin{align} \label{eq: safe free energy}
&\mathcal{F}(S, \mathbb{P}, \vx_0, \lambda) = -\lambda \log \Bigg[ \ExP{\Pb_S}{\exp(-\frac{1}{\lambda} S(V,\vx_0))} \nonumber + \\
& \ExP{\Pb_U}{\cancelto{0}{\exp(-\frac{1}{\lambda} S(V, \vx_0))}} \Bigg]   = -\lambda \log \Bigg[ \ExP{\Pb_S}{\exp(-\frac{1}{\lambda} S(V, \vx_0))} \Bigg] 
\end{align}
We can upper bound the free energy $\mathcal{F}$ using Jenson's Inequality,
\begin{align}
    \mathcal{F} &= -\lambda \log \Bigg[ \ExP{\Qb_S}{\exp(-\frac{1}{\lambda} S(V, \vx_0))  \frac{ \rd \Pb_S}{ \rd \Qb_S}}  \Bigg] \nonumber \\
    &= -\lambda \log \Bigg[ \ExP{\Qb_S}{\exp(-\frac{1}{\lambda} S(V, \vx_0))  \frac{ \rd \Pb_S}{ \rd \Qb_S}} \nonumber \\ 
    &+ \ExP{\Qb_U}{\cancelto{0}{\exp(-\frac{1}{\lambda} S(V, \vx_0))  \frac{ \rd \Pb_S}{ \rd \Qb_S}}}\Bigg] \nonumber \\
    &\leq \ExP{\Qb_S}{ S(V, \vx_0)} + \lambda \ExP{\Qb_S}{\log \frac{ \rd \Qb_S}{ \rd \Pb_S}} \label{eq: safe free energy relative entropy}
\end{align}
\Cref{eq: safe free energy relative entropy} now represents a constrained optimization problem with the solution $\Qb^*$ achieving the lower bound in the free energy inequality.
\begin{lemma} Let $\frac{ \rd \Qb_S^*}{ \rd \Pb_S} = \frac{1}{\eta}\exp(-\frac{1}{\lambda} S(V, \vx_0)) $, with $\eta = \ExP{\Pb_S}{\exp(-\frac{1}{\lambda} S(V,\vx_0))}$. Then \eqref{eq: free energy relative entropy} reduces to an equality. \label{lemma: optimal safe distribution}
\end{lemma}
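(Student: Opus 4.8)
The plan is to substitute the proposed Radon--Nikodym derivative directly into the right-hand side of the free-energy bound \eqref{eq: safe free energy relative entropy} and show it collapses to the value $-\lambda\log\eta$, which \eqref{eq: safe free energy} already identifies with $\mathcal{F}$. First I would verify that $\Qb_S^*$ is a genuine probability measure on the safe control samples: integrating the stated density against $\Pb_S$ gives $\tfrac{1}{\eta}\ExP{\Pb_S}{\exp(-\tfrac{1}{\lambda}S(V,\vx_0))}=1$ by the definition of $\eta$, and $\eta\in(0,\infty)$ since the unsafe part of the free energy vanished in \eqref{eq: safe free energy} while the safe part is a strictly positive, finite integral (this is where an implicit normalizability/integrability assumption on $S$ over $\Pb_S$ enters). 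This also gives $\Qb_S^*\ll\Pb_S$, so the entropy term is well defined.

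Next I would take logarithms of the density, $\log\tfrac{\rd\Qb_S^*}{\rd\Pb_S}=-\tfrac{1}{\lambda}S(V,\vx_0)-\log\eta$, and plug $\Qb_S=\Qb_S^*$ into the bound: the entropy term becomes $\lambda\ExP{\Qb_S^*}{\log\tfrac{\rd\Qb_S^*}{\rd\Pb_S}}=-\ExP{\Qb_S^*}{S(V,\vx_0)}-\lambda\log\eta$, so the two $S$-expectations cancel and the right-hand side equals $-\lambda\log\eta$. Since \eqref{eq: safe free energy} gives $\mathcal{F}(S,\Pb,\vx_0,\lambda)=-\lambda\log\eta$ as well, the inequality \eqref{eq: safe free energy relative entropy} is attained with equality at $\Qb_S^*$, which is the claim.

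The cleaner packaging I would actually write uses the factorization $\tfrac{\rd\Qb_S}{\rd\Pb_S}=\tfrac{\rd\Qb_S}{\rd\Qb_S^*}\cdot\tfrac{\rd\Qb_S^*}{\rd\Pb_S}$ for an arbitrary admissible $\Qb_S\ll\Pb_S$. Substituting the closed form of the second factor, the $S$-term inside $\lambda\ExP{\Qb_S}{\log\tfrac{\rd\Qb_S}{\rd\Pb_S}}$ exactly cancels $\ExP{\Qb_S}{S(V,\vx_0)}$, the constant $-\lambda\log\eta$ reproduces $\mathcal{F}$, and what remains is $\lambda\ExP{\Qb_S}{\log\tfrac{\rd\Qb_S}{\rd\Qb_S^*}}=\lambda\,\KL{\Qb_S}{\Qb_S^*}\ge 0$, with equality iff $\Qb_S=\Qb_S^*$ by Gibbs' inequality. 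This simultaneously re-derives the bound \eqref{eq: safe free energy relative entropy} and shows $\Qb_S^*$ is its unique minimizer, so in particular the inequality reduces to an equality there.

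The main obstacle is not the algebra, which is short, but the measure-theoretic bookkeeping inherited from the split $\Pb=\Pb_S\cup\Pb_U$: I need $\Qb_S^*$ to be supported on the same restriction as $\Pb_S$, $\eta$ to be finite and nonzero, and $S$ to be $\Qb_S^*$-integrable so that $\ExP{\Qb_S^*}{S}$ and the entropy term are simultaneously finite and may be added and cancelled; these are the regularity conditions that should be stated (or assumed, as in \cite{williams2017information}) for the manipulation to be legitimate. I would also note that if $\eta=0$ — every safe control sequence still incurs infinite cost — the statement is vacuous, and otherwise the argument above goes through verbatim.
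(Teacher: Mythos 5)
Your proposal is correct and follows essentially the same route as the paper: substitute the stated Radon--Nikodym derivative into the bound, take logarithms so the $\ExP{\Qb_S}{S(V,\vx_0)}$ terms cancel, and identify the remaining constant $-\lambda\log\eta$ with $\mathcal{F}$ via \eqref{eq: safe free energy}. Your additional KL-factorization remark (showing the gap equals $\lambda\,\KL{\Qb_S}{\Qb_S^*}$, hence uniqueness of the minimizer) and the explicit normalizability caveats on $\eta$ go slightly beyond what the paper writes, but the core argument is the same.
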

\begin{proof}
\begin{align}
    \mathcal{F} &\leq \ExP{\Qb_S}{ S(V, \vx_0)} + \lambda \ExP{\Qb_S}{\log \frac{ \rd \Qb_S^*}{ \rd \Pb_S}} \nonumber \\
    &= \ExP{\Qb_S}{S(V, \vx_0)} + \lambda \ExP{\Qb_S}{-\frac{1}{\lambda} S(V, \vx_0) - \log \eta} \nonumber \\
    &= \ExP{\Qb_S}{S(V, \vx_0)} - \ExP{\Qb_S}{S(V, \vx_0)} \nonumber \\
    &- \lambda \ExP{\Qb_S}{ \log \Big[ \ExP{\Pb_S}{\exp(-\frac{1}{\lambda} S(V,\vx_0))} \Big] }  \nonumber \\
    &= - \lambda \log \Big[ \ExP{\Pb_S}{\exp(-\frac{1}{\lambda} S(V,\vx_0))} = \mathcal{F}
\end{align}
\end{proof}

From \Cref{lemma: optimal safe distribution}, we see that as long as our likelihood ratio $\frac{ \rd \Qb_S^*}{ \rd \Pb_S}$ is proportional to $\exp(-\frac{1}{\lambda} S(V, \vx_0))$, we can use iterative importance sampling to estimate the optimal control distribution. In this case, the samples are taken from the safe distributions $\Pb_S$ and $\Qb_S$. 
Now the control objective is the following, and can be solved utilizing the same methods as regular MPPI \cite{williams2017information}.

\begin{align}
    U^* = \argmin_U \KL{\Qb^*_S}{\Qb_S}
    \label{eq: safe control_obj_MPPI}
\end{align}

\subsection{Safety Controlled Model Predictive Path Integral Control (SC-MPPI)} 
In this section we present an algorithm for safety controlled importance sampling in the context of MPPI. This framework is known as \ac{SC-MPPI}. The differences between this algorithm and traditional \ac{MPPI} are subtle, but important. The first difference is the computation of the Importance Sampling Sequence $U = \left(\vu_0 \dots \vu_{T-1} \right)$. The initial state of the system, along with an initial (potentially unsafe) importance sampling sequence are utilized to compute a safe importance sampling sequence $U_S$, along with any required parameters for the safe feedback controller $K_{\text{BaS}}$, see \Cref{fig: Quadrotor samples with feedback zoom}. In this work, we utilize Discrete Barrier State DDP \cite{almubarak2022safeddp} for the embedded safety controller. Note that this feedback controller is only valid within a domain of the nominal trajectory. The second difference appears in the form of safe sampling, where the embedded safety controller is utilized to perform feedback on the barrier state and the barrier state alone while forward sampling trajectories. The sampling procedure is summarized in \Cref{alg: safe AIS}, and the full control algorithm is summarized in \Cref{alg: SC-MPPI}.
\begin{algorithm}[htbp]
\footnotesize
\SetKwInOut{Input}{Given}
\Input{
      $F$, $q$, $\phi$, $\Sigma$, $R$, $R_{\text{fb}}$: Dynamics, Cost function parameters\;
      $T$, $N$: Sampling parameters\;
      $\lambda, \beta, \nu, j$: Temperature and control cost smoothing parameter, nominal feedback scale, barrier state index\;
      }
\SetKwInOut{Input}{Input}
\Input{
    $\vx_0$, $U_S$, $K_{\text{BaS}}$: Current state, Safe IS sequence, safe feedback controller\;
}
\BlankLine
\For{$n \leftarrow 1$ \KwTo $N$}{
    $\vx \leftarrow \vx_0$;\quad $S^n \leftarrow 0$\;
    Sample $\mathcal{E}^n = \left( \epsilon_0^n \dots \epsilon_{T-1}^n \right), ~\epsilon_k^n \in \mathcal{N}(0, \Sigma)$\;
    \For{$k \leftarrow 0$ \KwTo $T-1$}{
        \If{$k > 0$}{
        $S^n \pluseq q(\vx) $\;
        $S^n \pluseq \frac{\lambda (1-\alpha)}{2} \big( k_{\text{fb}}^{\rT} R_{\text{fb}}\Sigma^{-1} k_{\text{fb}} + (\vu_k + 2\epsilon_k^n)^{\rT} R\Sigma^{-1} \vu_k \big)$\;   
        }
        $\bar{\vx} \leftarrow \vx$\;
        $\bar{\vx}(j)\leftarrow 0$\; 
        $k_{\text{fb}} \leftarrow \nu \cdot K_{\text{BaS}}(\vx, \bar{\vx})$\;
        $\vx \leftarrow F\left(\vx, \vu_k + \epsilon_k^n + k_{\text{fb}} \right)$\;
    }   
    $S^n \pluseq \phi(\vx)$
}
\Return{$\mathbf{S} = \left( S^0 \dots S^N \right)$, $\mathcal{E}^n$}\;
\caption{Safety Controlled Importance Sampler (SCIS)} 
\label{alg: safe AIS}
\end{algorithm}

\begin{algorithm}[htbp]
\footnotesize
\SetKwInOut{Input}{Given}
\Input{$F, K_{\text{BaS}}, P$: Dynamics, Safety Controller\;
$P$, $\lambda$:  Maximum iterations, Temperature\;}
\SetKwInOut{Input}{Input}
\Input{
    $\vx_0$, $U$: Current state, Initial IS sequence\;}
\BlankLine
\For{$i \leftarrow 0$ \KwTo $P$}{
    $\big( U_S, K_{\text{BaS}}\big) \leftarrow \text{computeSafeFeedback}(\vx_0, U)$\;
    $\mathbf{S},~ \mathcal{E}^n \leftarrow \text{safetyControlledImportanceSampler}(\vx_0, U_S, K_{\text{BaS}})$\;
    $\bar{S} = \text{min}(\mathbf{S})$\;
    \For{$n = 1$ \KwTo $N$}{
        $S^n \leftarrow \text{exp}\big( -\frac{1}{\lambda} (S^n - \bar{S})  \big)$
    }
    $\eta \leftarrow \sum_{n=1}^N S^n$\;
    \For{$n = 1$ \KwTo $N$}{
        $w^n \leftarrow S^n / \eta$
    }
    \For{$k \leftarrow 0$ \KwTo $T-1$}{
    $\vu_k^* = 0$\;
        \For{$n = 1$ \KwTo $N$}{
            $\vu_k^* \pluseq w^n \cdot (\vu_k + \epsilon_k^n)$\;
        }
    }
    $U = \left( \vu_0^* \dots \vu_{T-1}^* \right)$
}
\Return{$U $}\;
\caption{\acf{SC-MPPI}} 
\label{alg: SC-MPPI}
\end{algorithm}

\ 
\section{Results} \label{section: results}
We now test the proposed algorithm \ac{SC-MPPI} against \textit{vanilla} \ac{MPPI}, under the barrier state dynamics. The algorithms were tested on a Dubins vehicle and a multirotor system in simulation. All experiments were run on an Intel i7-12700K with 32 GB of RAM and a NVIDIA RTX 3080 Ti GPU. The cost functions used in the experiments have the following form:
\begin{align*}
J_{\text{DDP}} &= \sum_{k=0}^{T-1} \big(\vx_k^\text{T} Q \vx_k +  \vu_k^\text{T} R \vu_k  + q_{\beta}\beta_k^2  \big) + \vx_T^\text{T} \Phi \vx_T 
\\
J_{\text{MPPI}} &= \sum_{k=0}^{T-1} \big(\vx_k^\text{T} Q \vx_k + \frac{\lambda (1-\alpha)}{2} \vu_k^\text{T} R \Sigma^{-1} \vu_k  \\&+ q_{\beta}\beta_k^2  \big) + \vx_T^\text{T} \Phi \vx_T 
\\
J_{\text{SC-MPPI}} &= \sum_{k=0}^{T-1} \big(\vx_k^\text{T} Q \vx_k + \frac{\lambda (1-\alpha)}{2} \vu_k^\text{T} R \Sigma^{-1} \vu_k  \\&+ k_{\text{fb}}^{\rT} R_{\text{fb}}\Sigma^{-1} k_{\text{fb}}  \big) + \vx_T^\text{T} \Phi \vx_T 
\end{align*}
where $k_{\text{fb}} = K_{\text{BaS}} \cdot \beta_k$. 
Note, an important aspect of \ac{SC-MPPI} is the computation of the safe controller along the importance sampling trajectory. MPPI might generate an unsafe reference trajectory and hence using classical barrier functions with DBaS-DDP is not feasible as it requires safe initializations (see \cite{almubarak2022safeddp}). Relaxed barrier functions \cite{hauser2006barrier,feller2016relaxed} would allow such scenarios on the other hand. In other words, relaxed barrier functions allow DDP to converge to a solution even if the majority of the reference trajectory was unsafe in addition to ensuring numerical stability.

%\edits{should we write something about not running randomized experiments? a review might ask or not like that although we designed those such that they are very hard as we stated. }

%\textcolor{red}{The dubin vehicle environment was setup as a dense navigation task, and was utilized as a proof-of-concept for the proposed algorithm. For the multirotor, 950 trials, where the initial and final conditions were varied under a uniform distribution, were performed on each of the three controllers. The resulting statistics in compute time, safety violation rate, task completion rate, \ac{RMSE}, velocity are summarized in \Cref{table: quad}.}

% \subsection{Double Integrator}
% \edits{Likely to be cut.}
% \begin{figure}[h]
%     \centering
%     \includegraphics[width=\linewidth]{} 
%     \caption{Double Integrator MPPI with barrier state feedback in sampling.  }
%     \label{fig: SAV MPPI}
% \end{figure}

\subsection{Dubins Vehicle}
As a proof of concept of the proposed algorithm, a Dubins vehicle must navigate a cluttered environment. We set up a \textit{dense} navigation problem such that the vehicle should narrowly move between the obstacles given its size (the vehicle's radius is $0.2$ units). The experiment's details and the details of the implementation are provided in \autoref{App subsec: Dubins Dynamics}. 

\begin{figure} 
    \centering
  \subfloat[]{\includegraphics[trim={1cm 1cm 1cm 1cm},clip,width=0.75\linewidth]{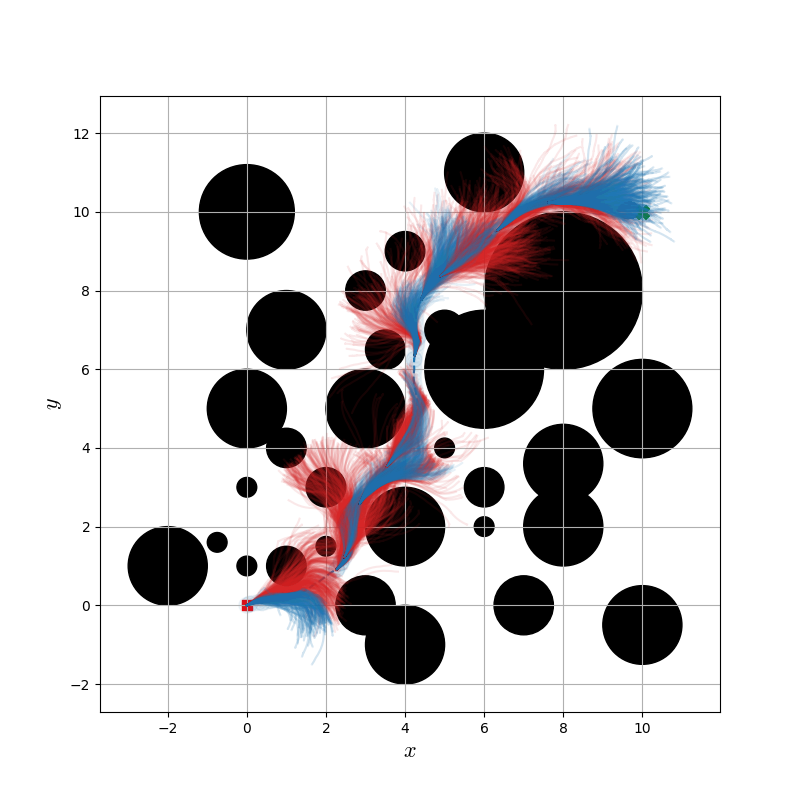}} \\ \vspace{-0.5mm}
  \subfloat[]{\includegraphics[trim={1cm 1cm 1cm 1cm},clip,width=0.75\linewidth]{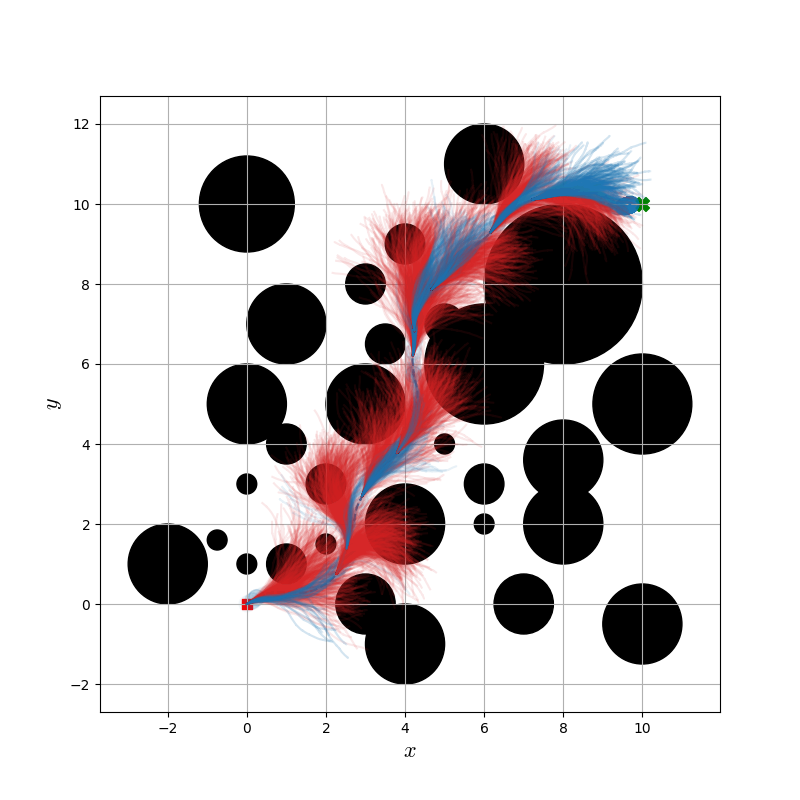}} \\ \vspace{-0.5mm}
  \caption{Dubins vehicle samples visualized with safe samples in blue, and unsafe samples in red. The red square is the start position, and the green X is the target location. The top figure is \ac{SC-MPPI} (with safety feedback on the samples) and the bottom one is \ac{MPPI}. It can be seen that \ac{SC-MPPI} samples are deflected, moving away from the obstacles projecting further safe exploration in addition to be close-packed between obstacles. }
  \label{fig: dubins sampling} 
\end{figure}

Implementation of the proposed algorithm, safety controlled MPPI, is shown in \autoref{fig: dubins sampling} (a), and vanilla MPPI with barrier in the cost is shown in \autoref{fig: dubins sampling} (b). To validate the idea of barrier states feedback in importance sampling, we show the two algorithms' samples (512 samples per step) at select time instances in the environment. Safe samples are shown in blue while unsafe samples are shown in red. Note that the environment is purposefully challenging to navigate,and as a result, a small perturbation can result in unsafe samples. As shown in \autoref{fig: dubins sampling}, the proposed algorithm \ac{SC-MPPI}, results in the samples to be deflected away from the obstacles, due to the barrier state feedback, effectively encouraging safe exploration. On the other hand, vanilla MPPI's samples are agnostic to the constraints and thus are distributed around the nominal trajectory in a parabola-like shape. In addition, it can be observed that vanilla MPPI's samples collide with more obstacles while \ac{SC-MPPI}'s samples stop before the obstacles most of the time. Furthermore, \ac{SC-MPPI} has more safe samples (blue) than vanilla MPPI.

Next, we provide detailed numerical comparisons between the algorithms for the multirotor example. 

%\edits{Do we have time to run comparisons?}

% \begin{figure}[h]
%     \centering
%     \includegraphics[width=\linewidth]{} 
%     \caption{Dubins Vehicle MPPI with barrier state feedback in sampling.  }
%     \label{fig: SAV MPPI}
% \end{figure}
\ 
\subsection{Multirotor}
The environment for this task has 19 obstacles of various sizes, and the system (with radius $1.5 m$) must navigate through this dense field, pictured in \autoref{fig: quadrotor sampling}. Safety violation is determined by collision of the system into any of the obstacles for a single time instant. Task completion is defined as entering within a $0.5 m$ radius of the desired final position. Two experiments are shown here, highlighting differences in performance which emerge from tuning, control variance, and problem horizon for both \ac{MPPI} and \ac{SC-MPPI}. We also compare the two algorithms against \acf{DBaS} embedded \acf{MPC-DDP}. The parameters for each of the controllers are provided in the Appendix (\Cref{App subsec: quad control params}).

\begin{figure*} [h]
    \centering
 \hspace{0mm}
  \subfloat[DDP]{\includegraphics[trim={20 30 60 20},clip,width=0.38\linewidth]{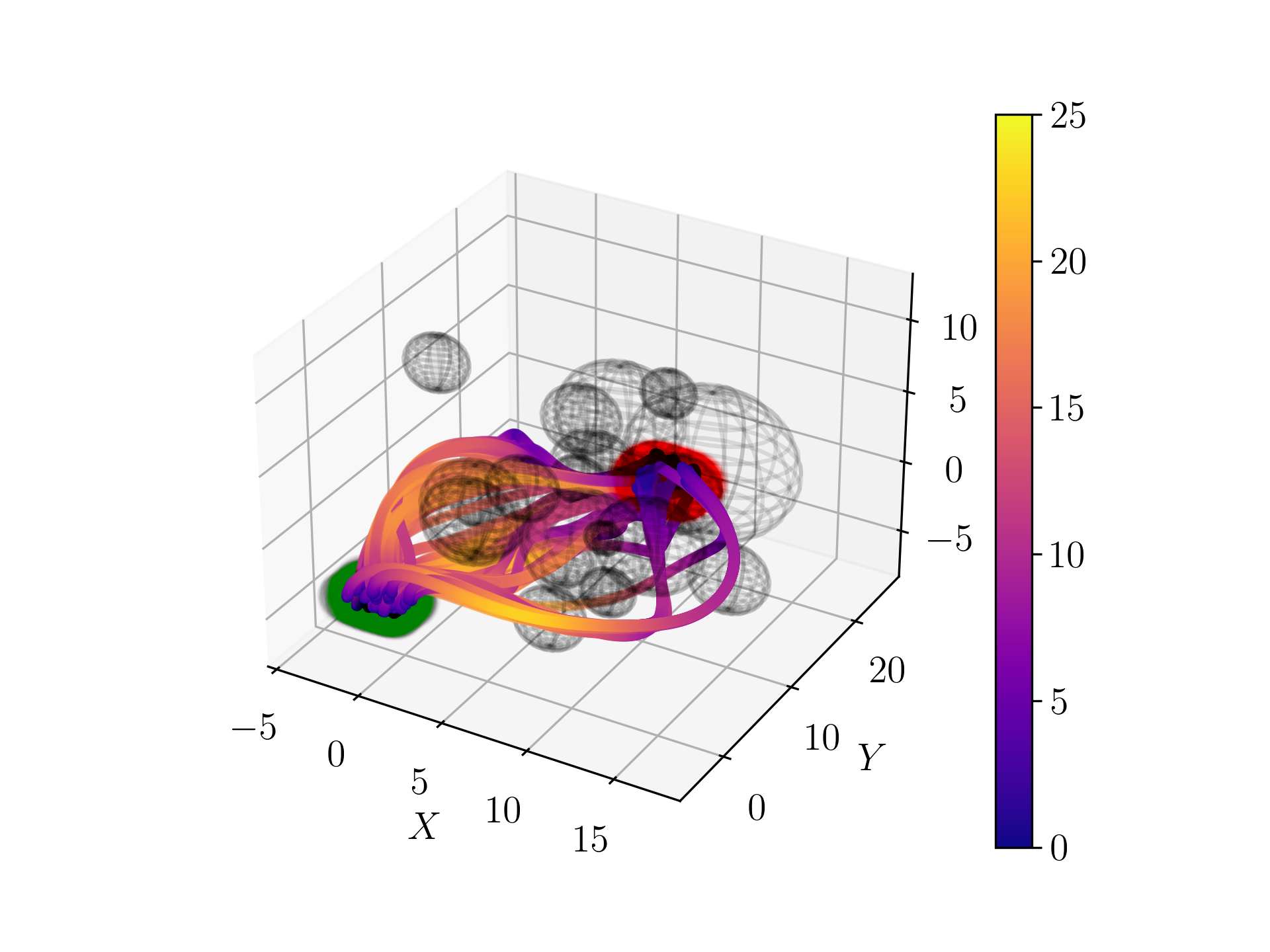}} \hspace{-15mm}
    \subfloat[MPPI]{\includegraphics[trim={20 30 60 20},clip,width=0.38\linewidth]{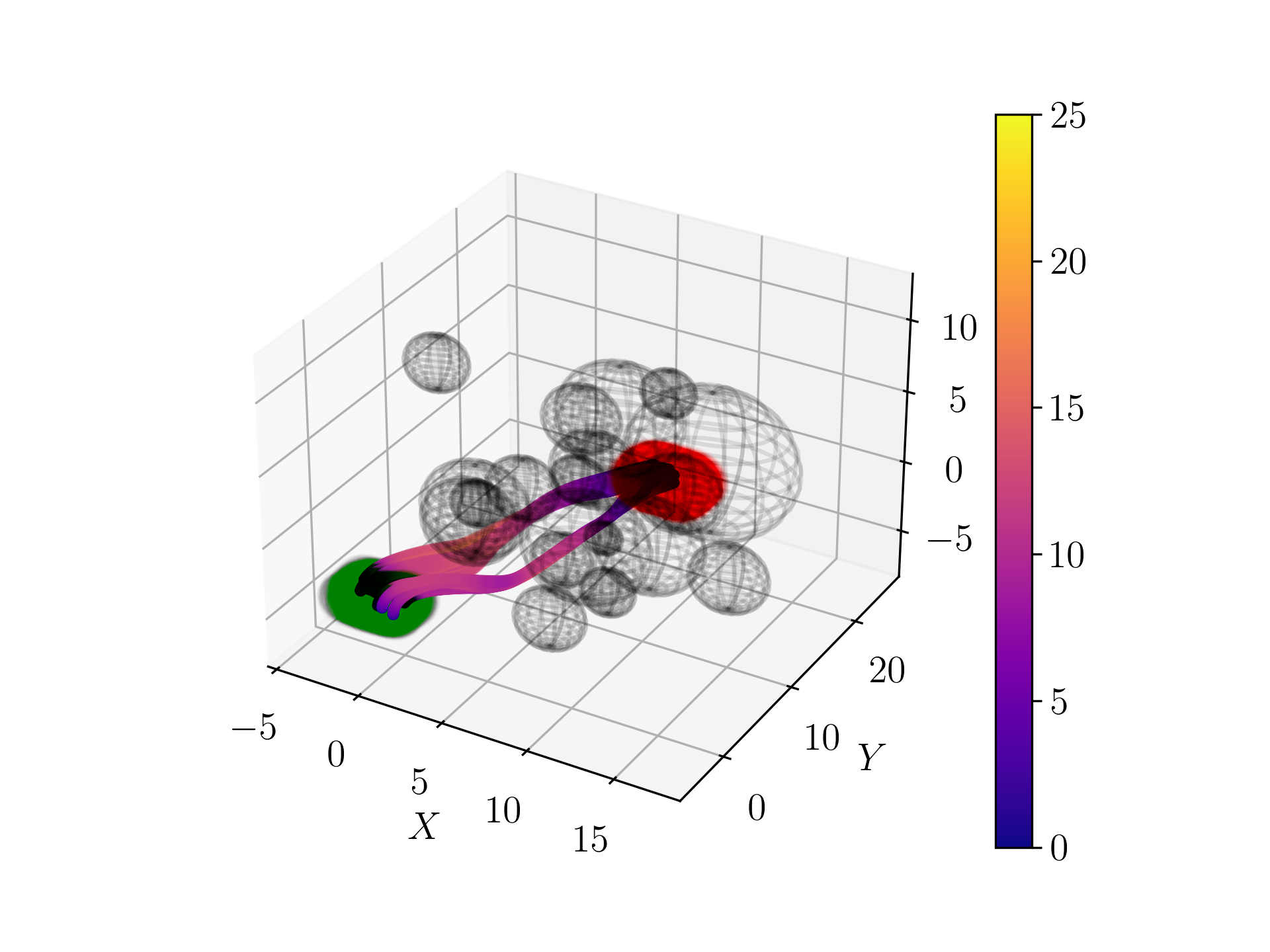}} \hspace{-15mm}
  \subfloat[SC-MPPI]{\includegraphics[trim={20 30 60 20},clip,width=0.38\linewidth]{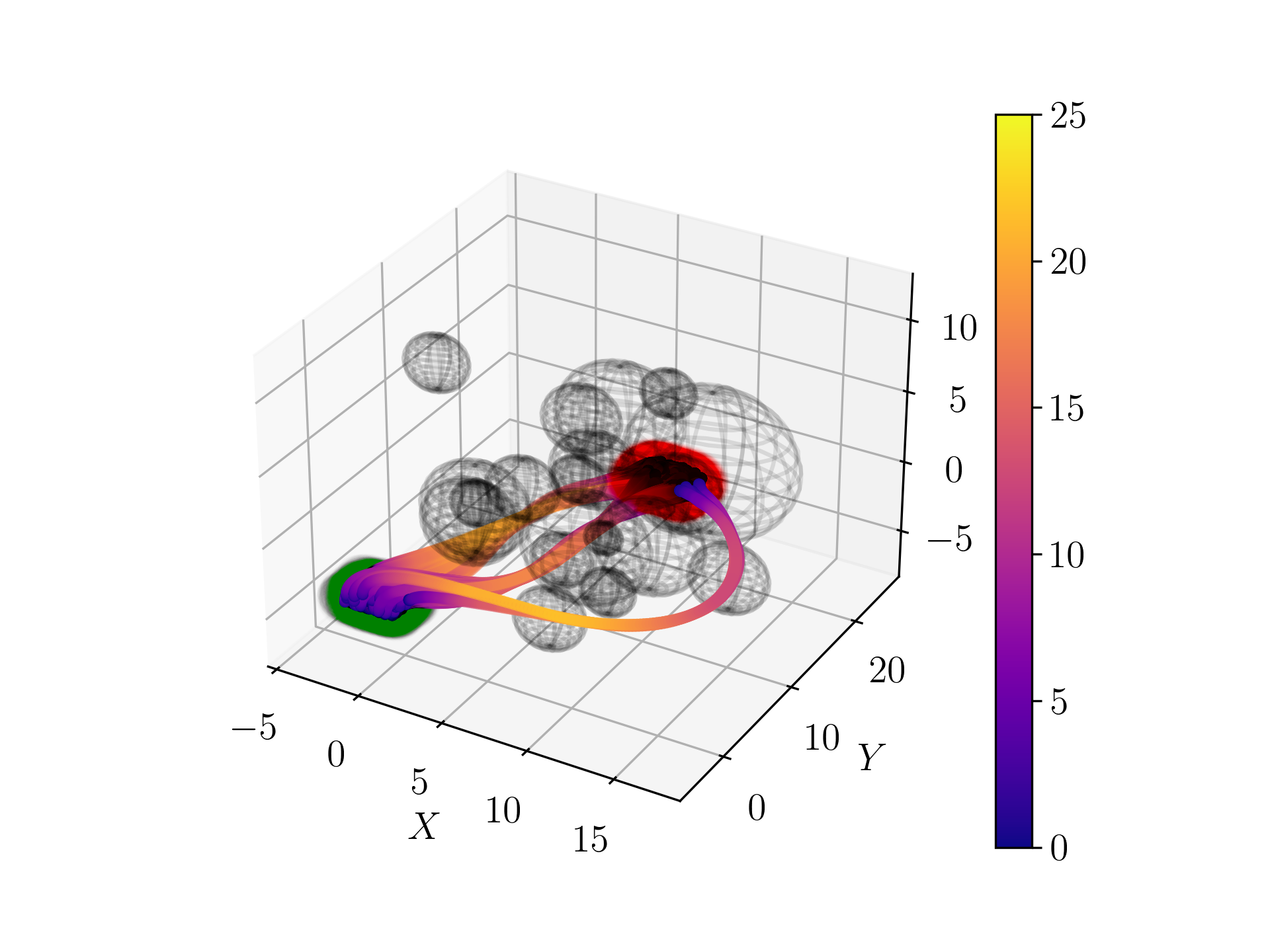}}
  \caption{\textbf{Experiment 1} Quadrotor trajectories visualized for each algorithm. The trajectories are colored based on velocity. (a) DDP, note how the solutions span more of the state space. (b) MPPI, the solutions focus on the center of the state space due the low control exploration variance. (c) \ac{SC-MPPI}, the solutions can explore the state space better due to the influence of the underlying safety controller, while maintaining better statistics \Cref{table: quad small variance}. The lower exploration variance hinders \ac{MPPI}, while \ac{SC-MPPI} is able to leverage the safety controller to explore more of the state space.}
  \label{fig: quadrotor trajectories small variance} 
\end{figure*}

\textbf{Experiment 1:} For the first experiment, the problem horizon is set to \textit{3 seconds}, with control variances for \acf{MPPI} and \acf{SC-MPPI} set to $\sigma = [5.0, 5.0, 5.0, 15.0 ]$. Trajectories for each of the trials are visualized in \Cref{fig: quadrotor trajectories small variance}, with each trajectory colored by velocity. The effect of lower control exploration variance is clear for \ac{MPPI}, where the trajectories that both maintain safety and complete the task pass almost directly through the dense obstacle field. Under the given tuning parameters, DDP has a high velocity but also high safety violation percentage at $36.32\%$ (\Cref{table: quad small variance}). In contrast, MPPI has a low completion percentage, but an even lower task completion rate when compared to DDP. \ac{SC-MPPI} outperforms both algorithms in safety, completion, RMS error, and task completion time under this task. 

\begin{table}[h!]
\scriptsize
\caption{\textbf{Experiment 1} Multirotor Dense Navigation Statistical Trial}
\centering
 \begin{tabular}{l |c |c |c} 
  & \ac{DDP} & \ac{MPPI} & \ac{SC-MPPI} \\ 
 \hline\hline
 Compute Time (ms)      & $16.56  \pm (1.56)$     & $\mathbf{3.16 \pm (0.06)}$        & $6.15 \pm (0.61)$ \\ 
  \hline
 Safety Violation \%  & $36.32\%$              & $1.05\%$        & $\mathbf{0.84\%}$\\
  \hline
 Task Completion \%   & $16.63\%$              &  $4.32\%$               & $\mathbf{68.84\%}$ \\
  \hline
  Completion Time (s)   & $2.30 \pm (0.33)$      & $2.93 \pm (0.05)$        & $\mathbf{2.11 \pm (0.15)}$ \\
 \hline
 Position RMSE (m) & $0.62 \pm (0.18)$      & $0.46 \pm (0.03)$        & $\mathbf{0.14 \pm (0.14)}$ \\
 \hline
  Avg Velocity (m/s)    & $\mathbf{7.87 \pm (0.97)}$      & $6.30 \pm (0.27)$        & $6.78 \pm (0.44)$ \\
  \hline
  Max Velocity (m/s)    & $\mathbf{18.15 \pm (2.20)}$     & $14.21 \pm (1.46)$       & $17.15 \pm (1.67)$ \\
 \hline
 \end{tabular}
 \label{table: quad small variance}
\end{table}

The task completion issues for MPPI are likely due to the lower control exploration variance, as well as the limited time horizon for the problem. MPPI takes more time to find a solution around the obstacles, and does not have enough time enter the completion radius. DDP has large average and maximum speeds, but typically takes a longer path around all the obstacles. Under the same problem, \ac{SC-MPPI} has a safe sample rate of $38.54\%$, and \ac{MPPI} has a safe sample rate of $37.20\%$, when safe samples are averaged across all trajectories and timesteps. While the difference in the number of safe samples is quite small, the performance margin is quite large. The reasoning behind the low safe sample percentage is likely due to the density of the obstacle course, and the time limitation for the task. Since the task attempts to send the quadrotor through the field in under 3 seconds, most trajectories will impact the obstacles. In \Cref{fig: quadrotor sampling}, we can observe the differences in \ac{MPPI} and \ac{SC-MPPI} sampling for the quadrotor for Experiment 1 and directly see the variations in safe versus unsafe samples for the two sampling-based algorithms. The safe underlying controller for \ac{SC-MPPI} allows the reference trajectories to move closer towards the goal, and the barrier state feedback clearly forces the samples away from the obstacles.
\begin{figure} [h!]
    \centering
  \subfloat[]{\includegraphics[trim={0.0cm 0.0cm 1.5cm 1.0cm},clip,width=0.5\linewidth]{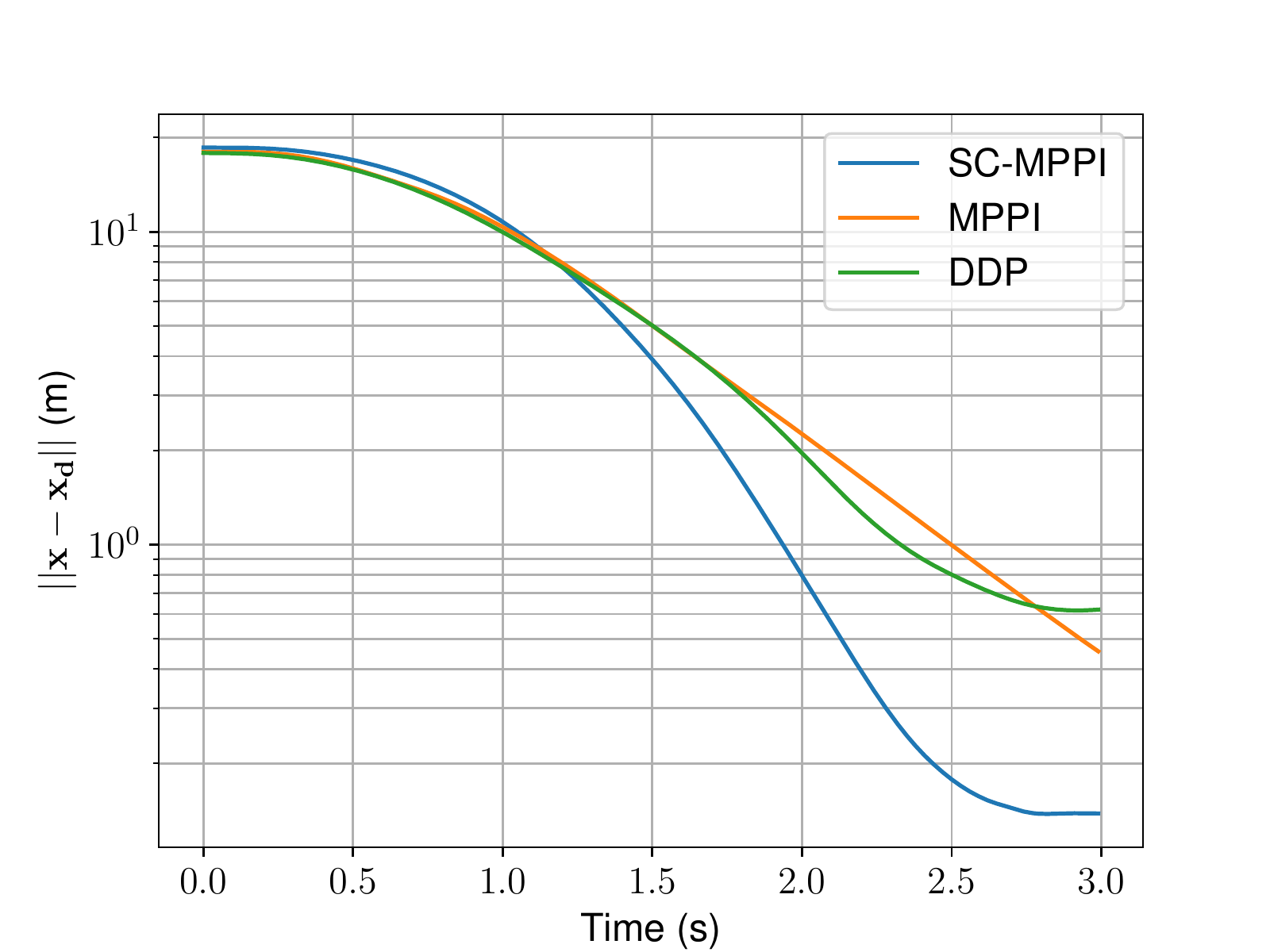}}
  \subfloat[]{\includegraphics[trim={0.0cm 0.0cm 1.5cm 1.0cm},clip,width=0.5\linewidth]{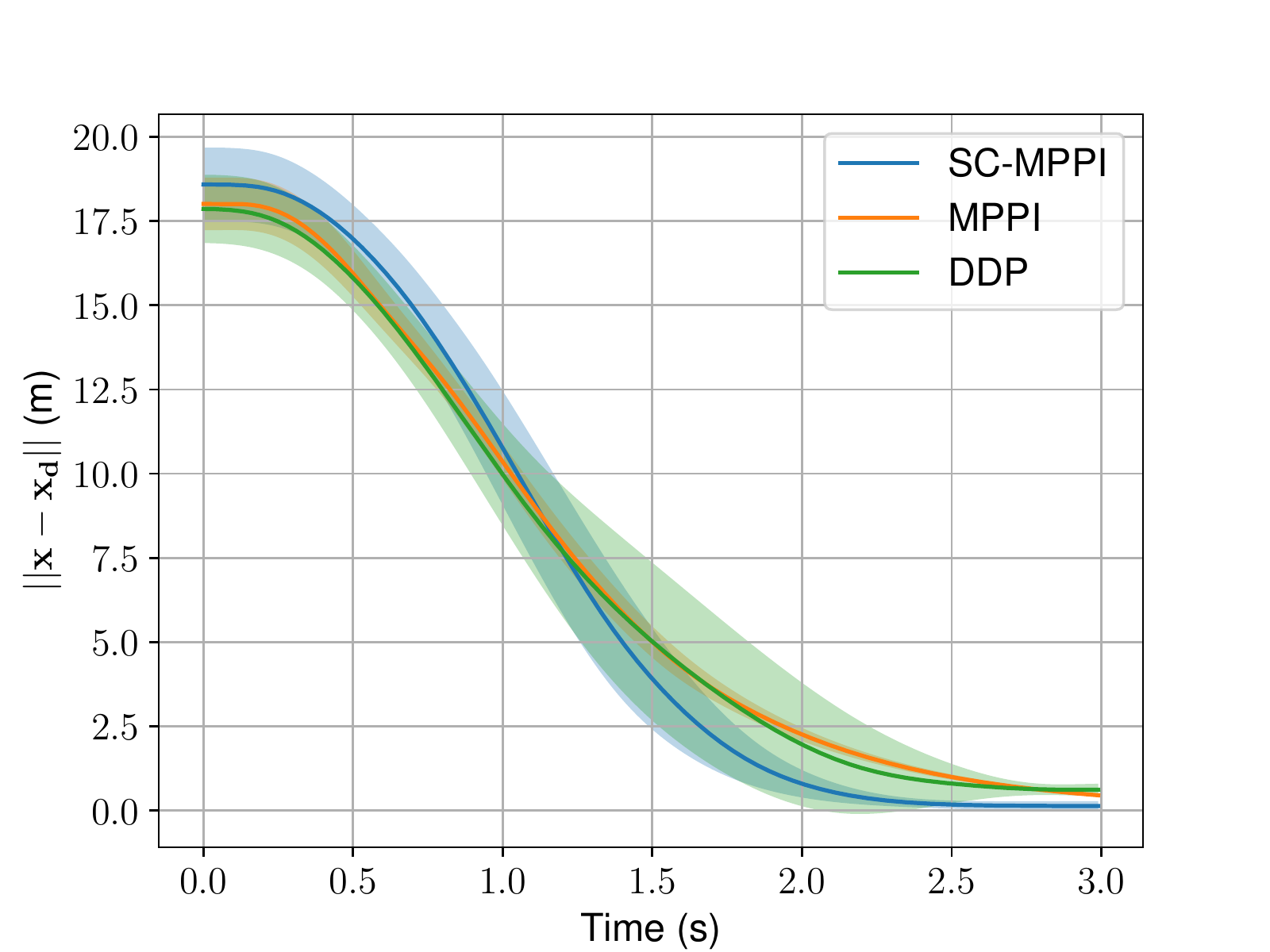}}
  \\
    \subfloat[]{\includegraphics[trim={0.0cm 0.0cm 1.5cm 1.0cm},clip,width=0.5\linewidth]{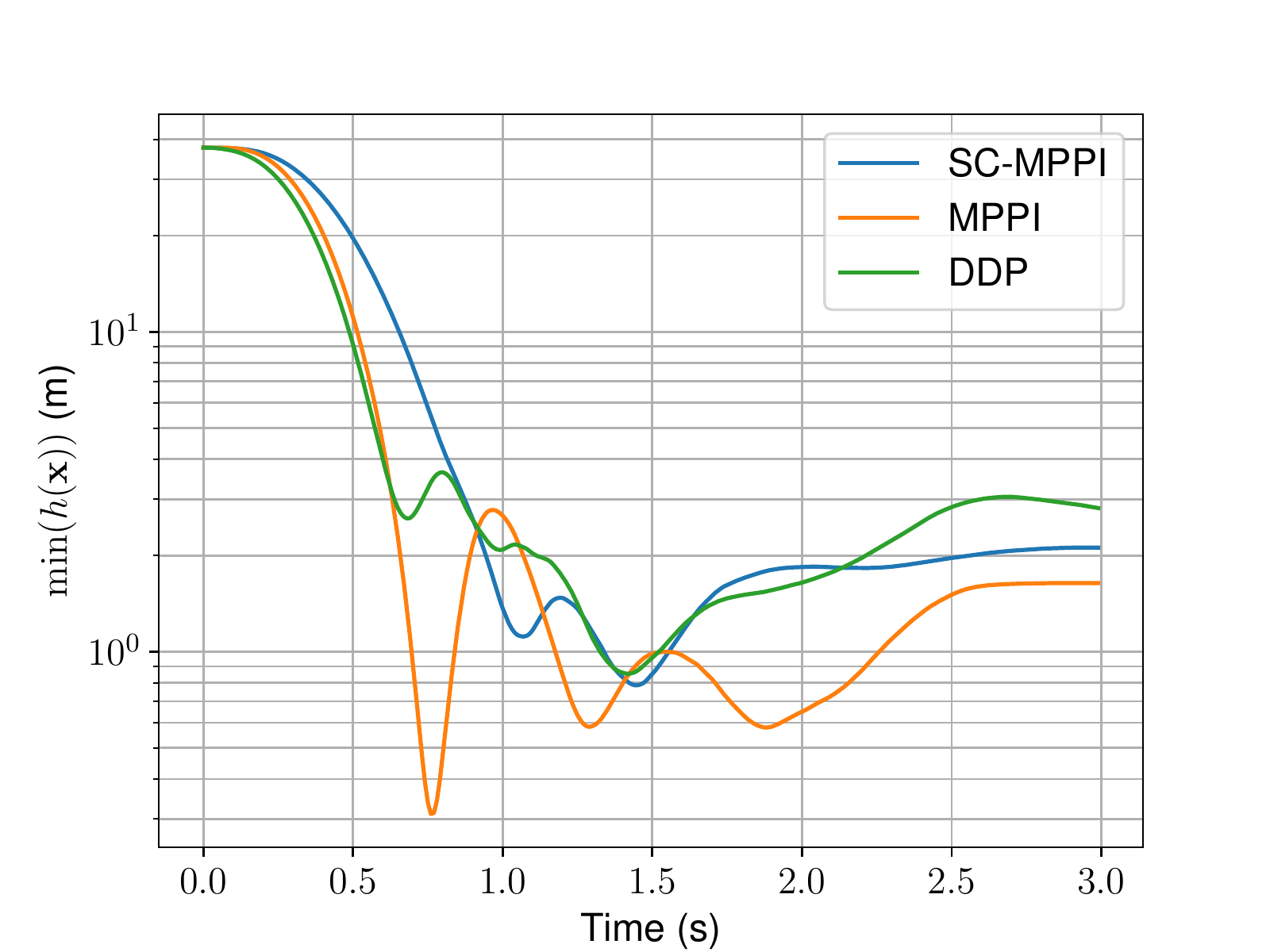}}
  \subfloat[]{\includegraphics[trim={0.0cm 0.0cm 1.5cm 1.0cm},clip,width=0.5\linewidth]{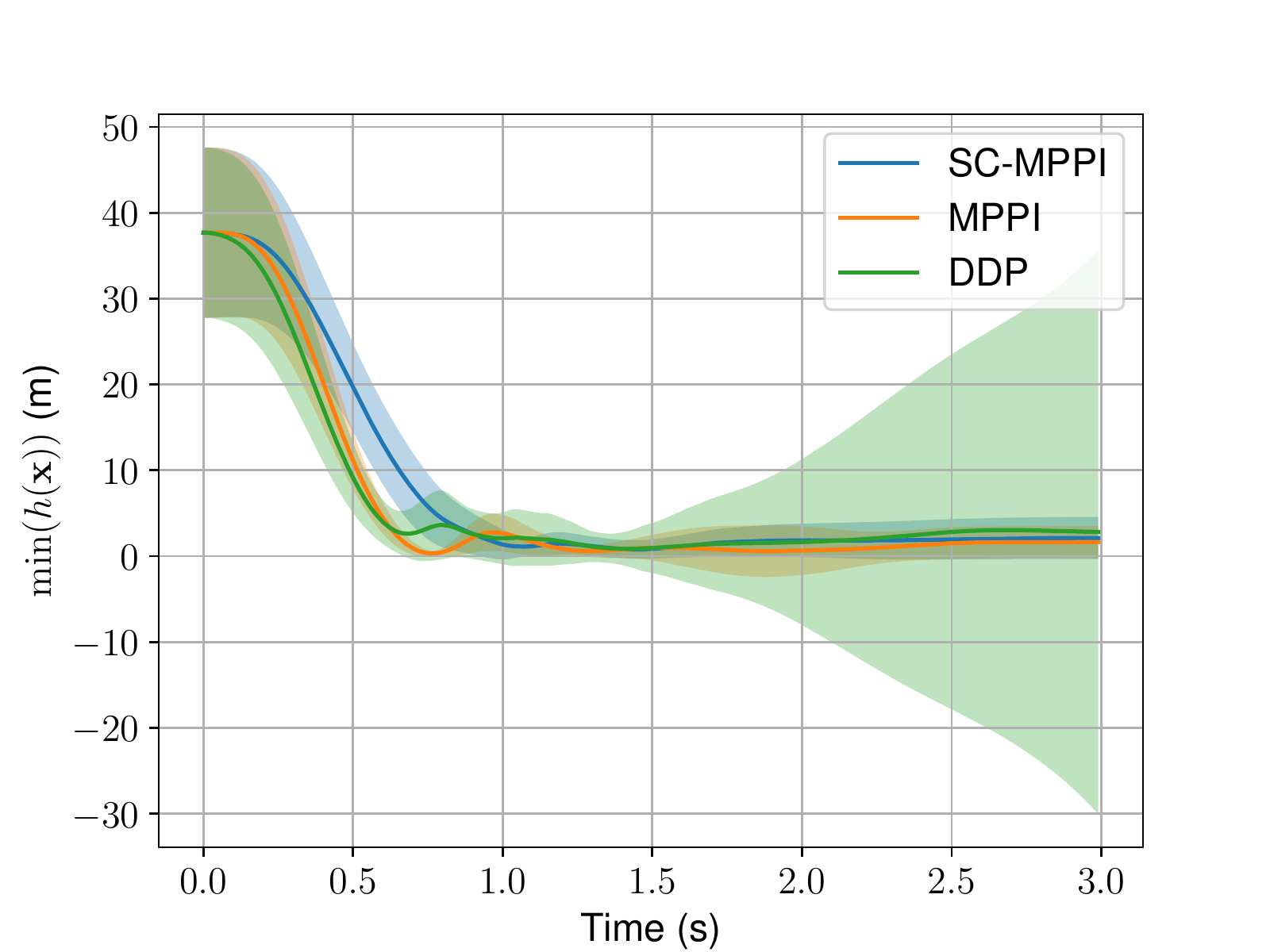}}
  \caption{In the top row is the average distance from the target for the compared algorithms in the logarithmic scale (left) and in the normal scale (right). In the bottom row is the average (over all safe, but potentially incomplete runs) minimum distance to the obstacles in the logarithmic scale (left) and the normal scale (right). Clearly, \ac{SC-MPPI} achieves more reachability to the target. We hypothesize that the reason is due to the encouraged safe exploration since the target is right behind an obstacle. Additionally, it maintains a further distance from the obstacles compared to MPPI.}
  \label{fig: quadrotor distance_results_experiment_1} 
\end{figure}

\begin{figure*} [th!]
    \centering
 \hspace{0mm}
  \subfloat[DDP]{\includegraphics[trim={20 30 60 20},clip,width=0.38\linewidth]{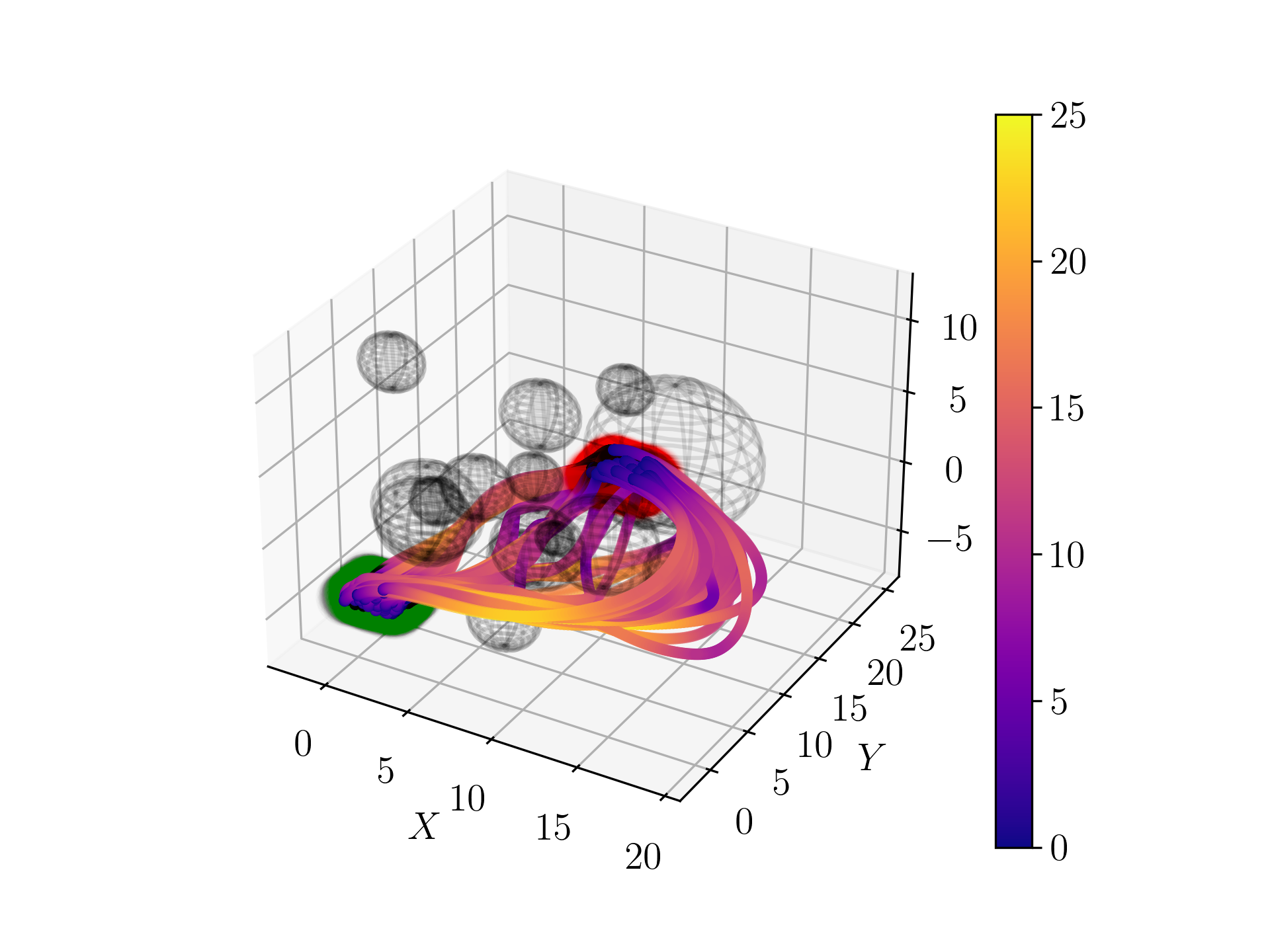}} \hspace{-15mm}
    \subfloat[MPPI]{\includegraphics[trim={20 30 60 20},clip,width=0.38\linewidth]{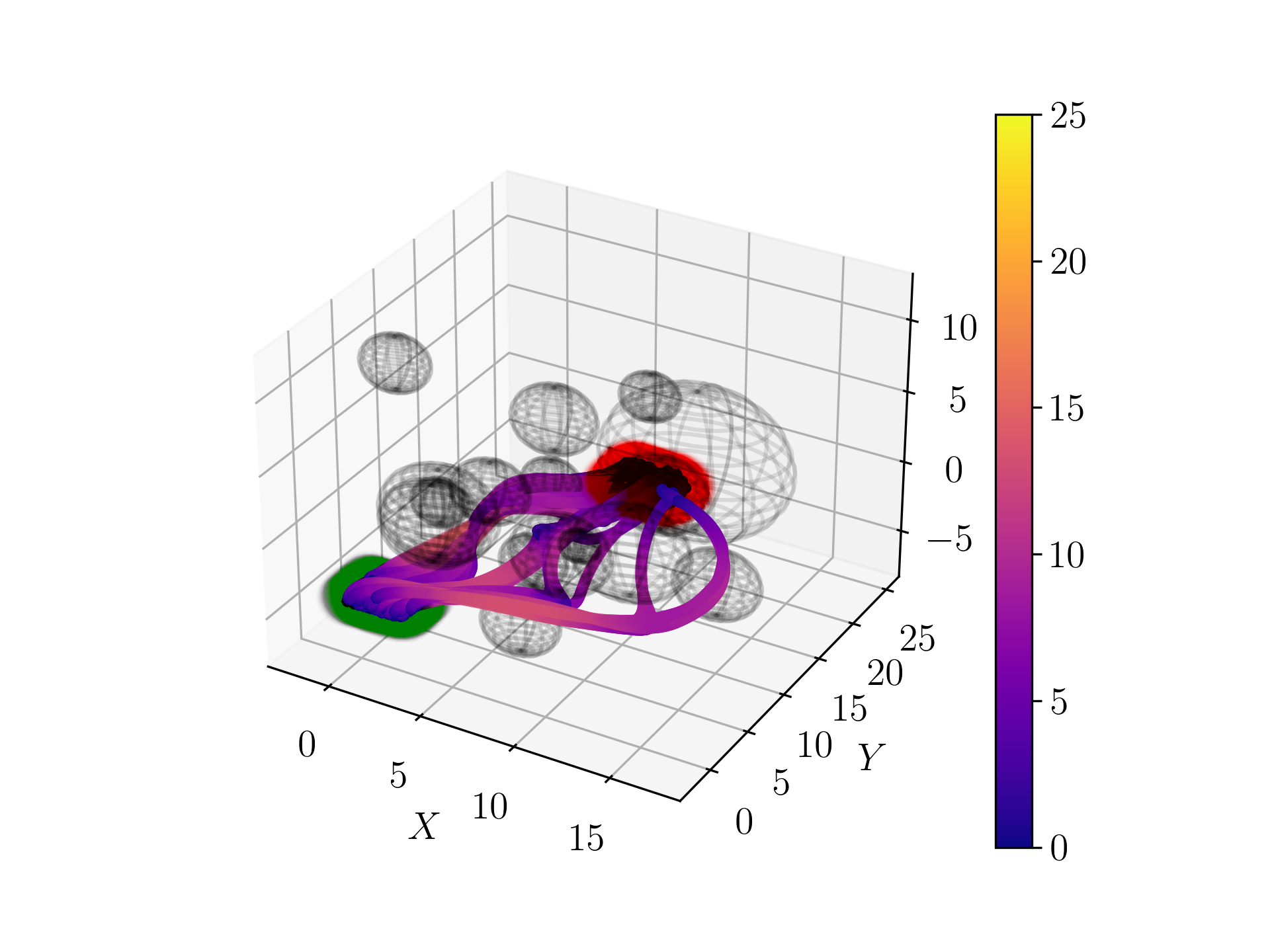}} \hspace{-15mm}
  \subfloat[SC-MPPI]{\includegraphics[trim={20 30 60 20},clip,width=0.38\linewidth]{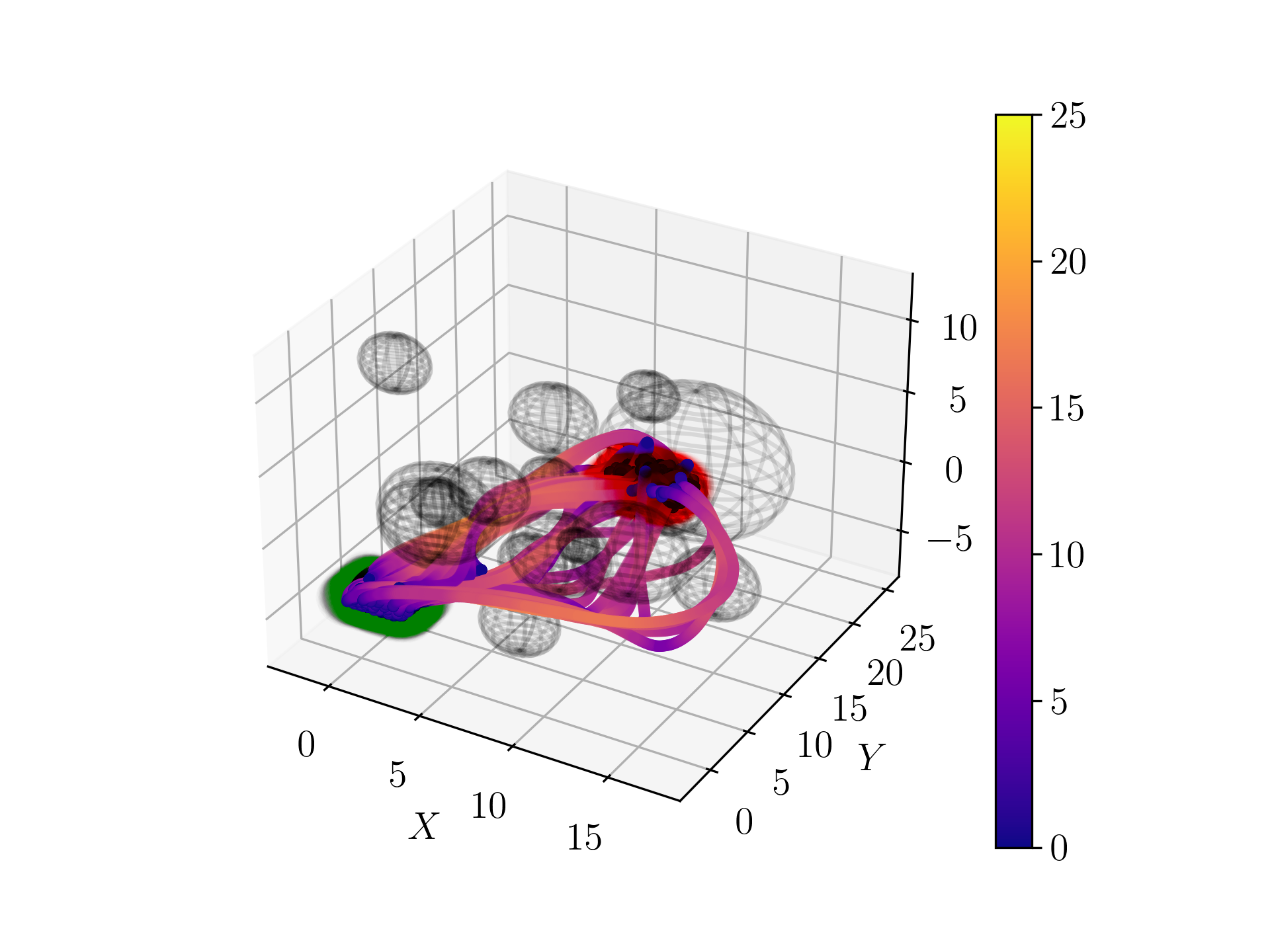}}
  \caption{\textbf{Experiment 2} Quadrotor trajectories visualized for each algorithm. The trajectories are colored based on velocity. (a) DDP generates far faster, but more unsafe solutions. (b) MPPI can find more solutions given the higher control variance, but requires more time. (c) \ac{SC-MPPI} can again leverage the underlying safe controller to find higher velocity solutions while maintaining safety. The statistics are shown in \Cref{table: quad large variance}.}
  \label{fig: quadrotor trajectories large variance} 
\end{figure*}

\Cref{fig: quadrotor distance_results_experiment_1} displays the mean and standard deviation of the system from the closest obstacle, as well as to the final target \textit{for all episodes that did not crash}. Clearly, the proposed algorithm \ac{SC-MPPI} demonstrates the lowest final error and a tight distribution to the final target amongst the trials. \ac{MPPI} shows a wide variation in regards to the distance to the closest obstacle, with DDP having a very large variation in the distance to the terminal obstacle for the final state. This large variation stems from the fact that DDP could often diverge away from the arena if it could not find a solution, resulting in the distance from the system to any given obstacle becoming quite large.

\begin{figure} [h]
    \centering
  \subfloat[]{\includegraphics[trim={0.0cm 0.0cm 1.5cm 1.0cm},clip,width=0.5\linewidth]{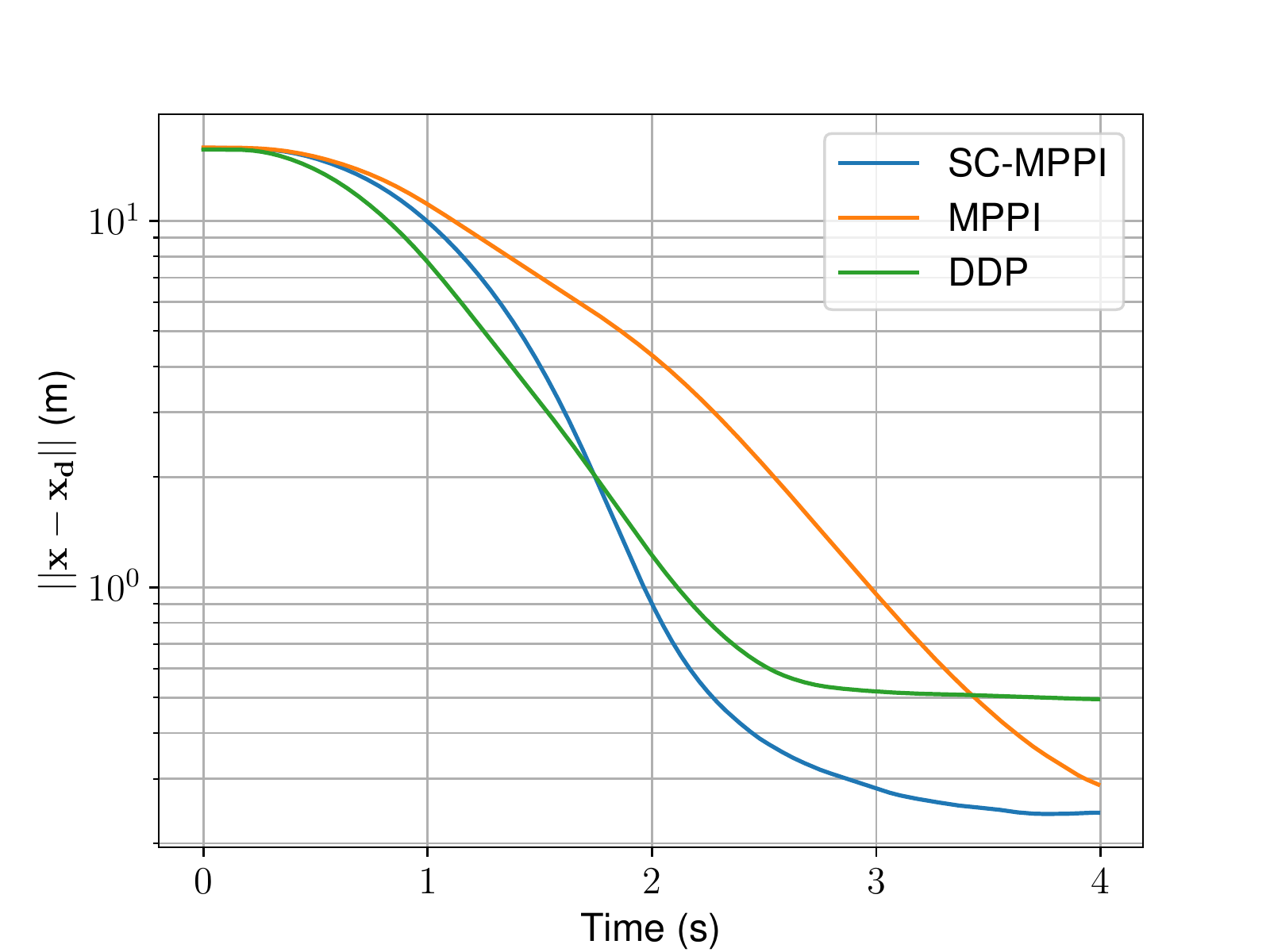}}
  \subfloat[]{\includegraphics[trim={0.0cm 0.0cm 1.5cm 1.0cm},clip,width=0.5\linewidth]{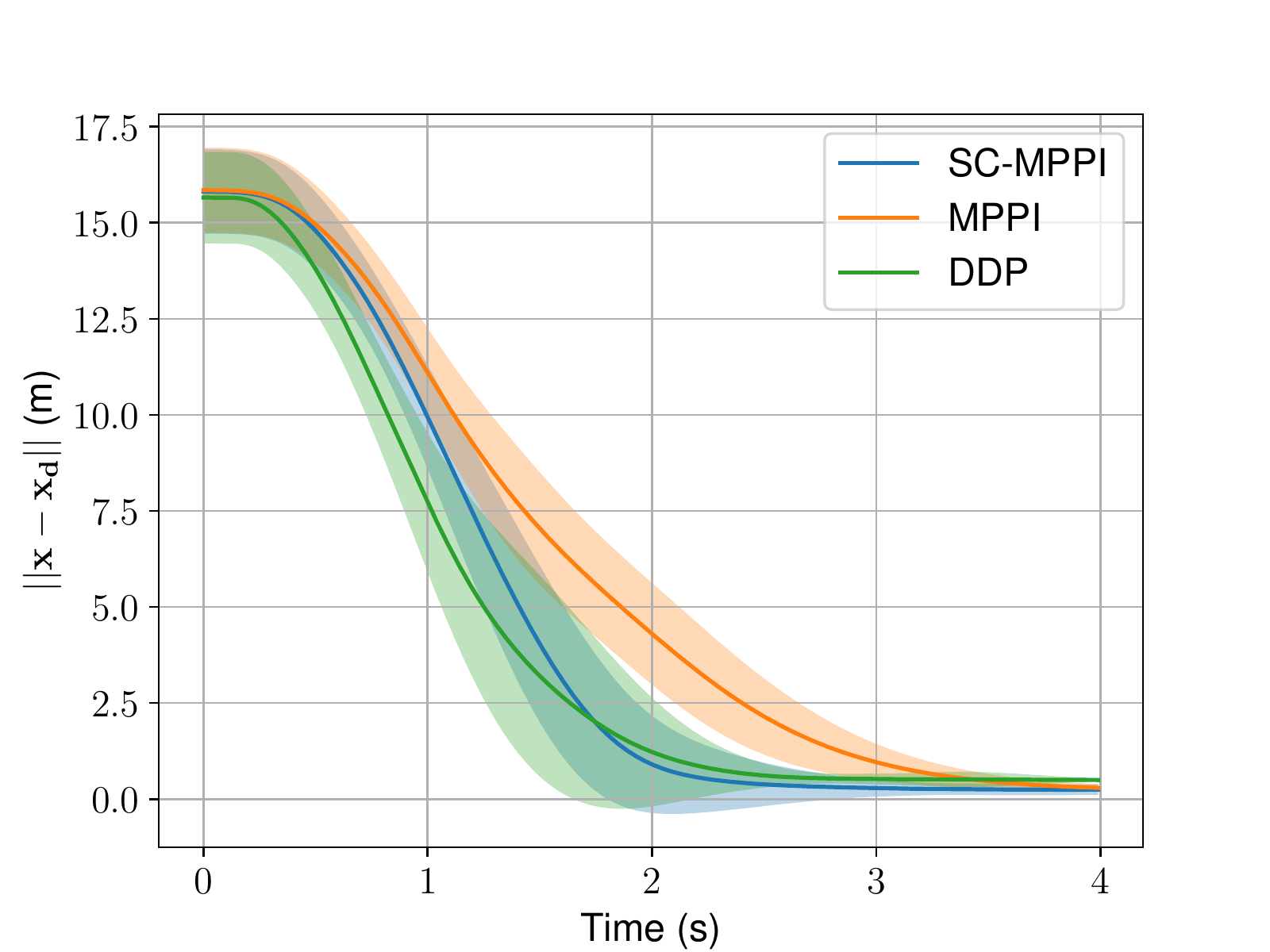}}
  \\
    \subfloat[]{\includegraphics[trim={0.0cm 0.0cm 1.5cm 1.0cm},clip,width=0.5\linewidth]{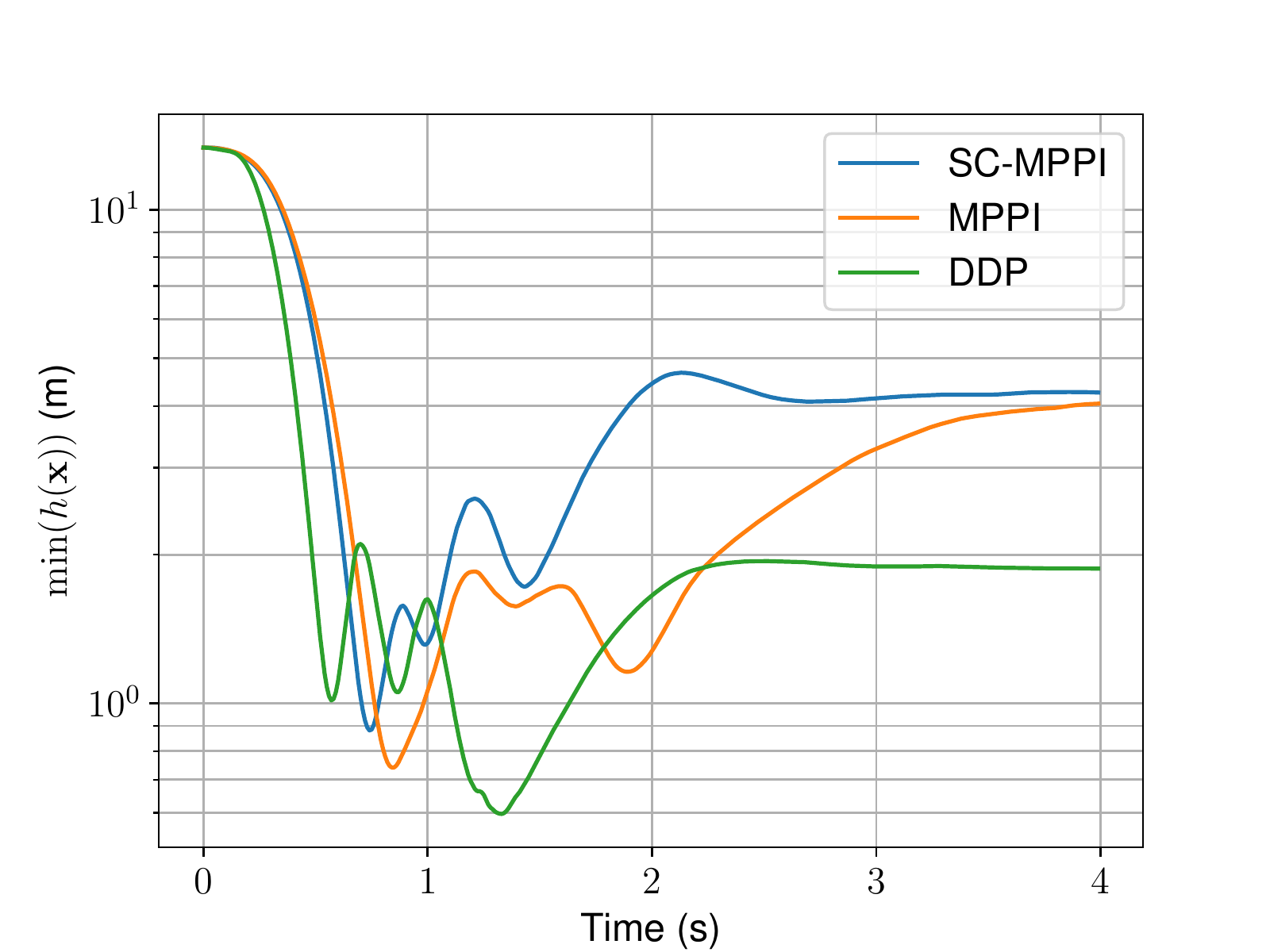}}
  \subfloat[]{\includegraphics[trim={0.0cm 0.0cm 1.5cm 1.0cm},clip,width=0.5\linewidth]{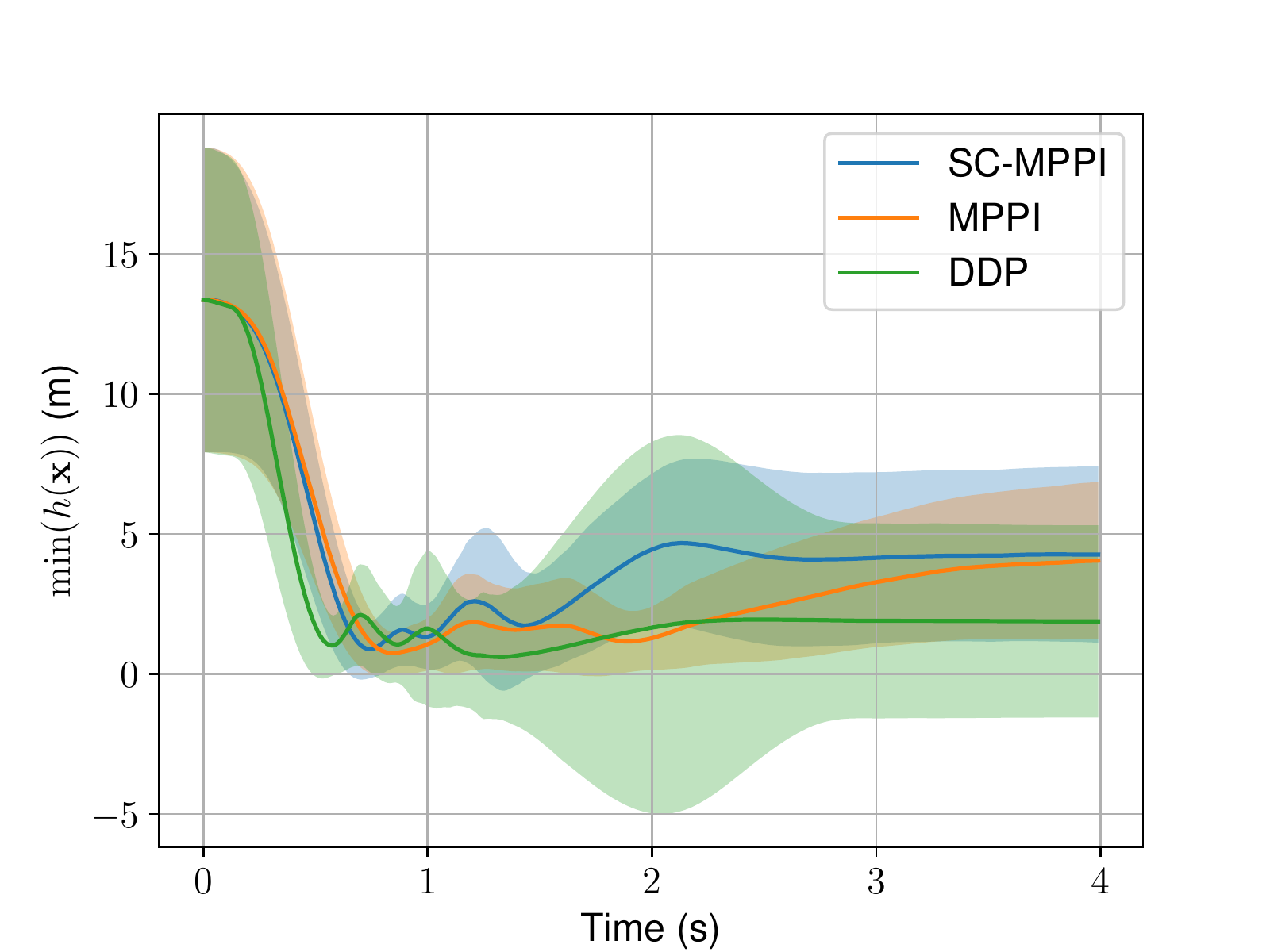}}
  \caption{In the top row is the average distance from the target for the compared algorithms in the logarithmic scale (left) and in the normal scale (right). In the bottom row is the average (over all runs) minimum distance to the obstacles in the logarithmic scale (left) and the normal scale (right). Clearly, \ac{SC-MPPI} achieves more reachability to the target which we hypothesise that its due to the encouraged safe exploration since the target is right behind an obstacle. Additionally, it maintains a further distance from the obstacles compared to MPPI.}
  \label{fig: quadrotor distance results experiment 2} 
\end{figure}

\textbf{Experiment 2:} In this second experiment, the control variance for both \ac{MPPI} and \ac{SC-MPPI} was set to be quite large at $\sigma = [150.0, 150, 50.0, 500.0 ]$, and the time horizon for the overall problem was set to be \textit{4 seconds}. This high control variance turned out to be a necessity for \ac{MPPI} to find a solution. We can see the trajectories of this experiment in \Cref{fig: quadrotor trajectories large variance}. The effect of the increased time horizon and exploration is immediately apparent in the experiment statistics, given in \Cref{table: quad large variance}. Here we can see greatly increased task completion rates for the sampling-based algorithms, and a shift in the performance of safety violation, where \ac{MPPI} has the fewest crashed trajectories. The task completion and final error still favors \ac{SC-MPPI}. Interestingly, the safety sample percentage for MPPI is $45\%$, when averaged across all trajectories and timesteps, whereas for \ac{SC-MPPI} the safe sample percentage is $58.43\%$. Here the correlation between safe sample percentage and performance is less clear. Emipirically we see that sample efficiency of the system matters less than the system having a ``better'' nominal trajectory. Since we are only approximating the true free energy with sampling, a safer initialization appears to lead to more performant solutions. In \Cref{fig: quadrotor distance results experiment 2}, we can see how both DDP and \ac{SC-MPPI} complete the task faster with a tight distribution, but \ac{MPPI} as almost able to achieve the same level of RMS error and distance from the nearest obstacle as \ac{SC-MPPI}. The larger time horizon and larger control variance has a clear benefit for \ac{MPPI} to complete the navigation task.

\begin{table}[h!]
\scriptsize
\caption{\textbf{Experiment 2} Multirotor Dense Navigation Statistical Trial}
\centering
 \begin{tabular}{l |c |c |c} 
  & \ac{DDP} & \ac{MPPI} & \ac{SC-MPPI} \\ 
 \hline\hline
 Compute Time (ms)      & $33.12  \pm (4.98)$     & $\mathbf{3.51 \pm (0.13)}$        & $4.08 \pm (0.11)$ \\ 
  \hline
 Safety Violation \%  & $71.10\%$              & $\mathbf{0.20\%}$        & $3.50\%$\\
  \hline
 Task Completion \%   & $26.10\%$              &  $80.40\%$               & $\mathbf{95.20\%}$ \\
  \hline
  Completion Time (s)   & $2.55 \pm (0.57)$      & $3.39 \pm (0.28)$        & $\mathbf{2.07 \pm (0.33)}$ \\
 \hline
 Position RMSE (m) & $0.49 \pm (0.04)$      & $0.29 \pm (0.10)$        & $\mathbf{0.24 \pm (0.14)}$ \\
 \hline
  Avg Velocity (m/s)    & $\mathbf{5.09 \pm (1.17)}$      & $4.44 \pm (0.31)$        & $4.95 \pm (0.39)$ \\
  \hline
  Max Velocity (m/s)    & $\mathbf{18.17 \pm (2.46)}$     & $11.29 \pm (1.41)$       & $15.38 \pm (1.78)$ \\
 \hline
 \end{tabular}
 \label{table: quad large variance}
\end{table}

% \begin{figure}[h]
%     \centering
%     \includegraphics[trim={3cm 1cm 2cm 3cm},clip, width=\linewidth]{} 
%     \caption{Quadrotor MPPI with barrier state feedback in sampling.}
%     \label{fig: SAV MPPI}
% \end{figure}

\section{Conclusion} \label{sec:conclusion}
In this work, we proposed the idea of importance sampling under safety embedded feedback control. This was then utilized to develop the algorithm \ac{SC-MPPI}. We derived this new algorithm under the principles of information theoretic MPPI and compare it against embedded barrier state MPC-DDP and MPPI. We empirically show that the proposed algorithm can provide a distinct improvement in system performance, system safety, and control exploration, even with lower control variance. Additionally, the algorithm is shown to be computationally feasible to be run in real time, with our experiments demonstrating optimization times of 4-6 milliseconds. \ac{SC-MPPI} does require more optimization time overall when compared to \ac{MPPI}, and the requirement of tuning the additional safety controller can be overkill depending on the problem at hand. We have shown that for difficult, dense navigation tasks, our proposed method can outperform existing techniques. The utilization of a safety controller to improve both the initial importance sampling trajectory for MPPI, as well as maintain the safety of samples moving forward opens the door to further research in safety-critical, sampling-based MPC methods.

% . Consequently, the importance sampling to estimate the optimal control distribution for the constrained problem is done on the defined safe distribution. The proposed algorithm was implemented on a Dubins vehicle as a proof of concept to show the sought-after results. \ac{SC-MPPI} was compared to MPPI and DDP in more complex scenarios on a multirotor system. Empirical results were provided which validate the efficacy of the proposed algorithm and show that the proposed approach completely outweigh the other methods in sample efficiency, exploration, and system performance. 

%\section*{Acknowledgments}

%% Use plainnat to work nicely with natbib. 

\bibliographystyle{plainnat}
\bibliography{references}

\section{Appendix}
\label{Section: Appendix}

\subsection{Discrete Barrier States in Trajectory Optimization for Safe Feedback} \label{App subsec: DBaS in trajectory optimization}
Here we provide details of the barrier state feedback control in optimal control settings.

Consider the optimal control problem 
\begin{align} \begin{split}
    \min_{\utraj}  \sum_{k=0}^{T-1} & \big( q(\vx_k, k) + \vu^{\rT} \Sigma^{-1} \vu \big) + \phi(\vx_T) \\
    \text{subject to } & \vx_{k+1} = F(k, \vx_k, \vu_k ) \\
    & h(\vx_k) > 0
\end{split} \end{align}
Using barrier states as discussed in \autoref{subsec: embedded barrier states}, the problem is transformed to the unconstrained optimal control problem
\begin{align} \begin{split}
    \min_{\utraj}  \sum_{k=0}^{T-1} & \big( q(\bar{\vx}_k, k) + \vu^{\rT} \Sigma^{-1} \vu \big) + \phi(\bar{\vx}_T) \\
    \text{subject to } & \bar{\vx}_{k+1} = \bar{F}(k, \bar{\vx}_k, \vu_k )
\end{split} \end{align}
where $\bar{F} = \begin{bmatrix}F(k,\vx_k, \vu_k), & F^{\beta}\end{bmatrix}^\text{T}$ and $\bar{\vx} = \begin{bmatrix} \vx, & \beta\end{bmatrix}^\text{T}$, as defined in \eqref{eq: safety embedded model}. As mentioned before, one key advantage of augmenting the barrier dynamics is the ability to design a safe feedback controller that is a function of the barrier which is the part of the control law that provides safety (as we will show below). In what follows, we derive DBaS embedded differential dynamic programming (DBaS-DDP) equations in details to highlight the feedback terms in the feedback control equation (which was not derived in details by the authors in \cite{almubarak2022safeddp}). 

First, let us define the embedded model's gradients as
\begin{align}
    \bar{F}_{\bar{\vx}} = \begin{bmatrix} F_{\vx} & 0 \\ F_{\vx}^{\beta} & F_{\beta}^{\beta}  \end{bmatrix}, \qquad \bar{F}_u = \begin{bmatrix} F_{\vu}\\ F_{\vu}^{\beta} \end{bmatrix}
\end{align}
where $F^{\beta}$ is the DBaS dynamics function. Starting with Bellman's equation
\begin{equation} \label{eq: bellemns equation in append}
    V_k(\bar{\vx}_k) = \min_{\vu_k} \{ V_{k+1}(\bar{\vx}_{k+1}) + l_k(\bar{\vx}_k,\vu_k) \}
\end{equation}
Here we define the value function $V(\bar{\vx}) = V(\vx,\beta)$ and its gradient and Hessian as
$$
V_{\bar{\vx}} = \begin{bmatrix}V_{\vx} & V_{\beta} \end{bmatrix}, \qquad V_{\bar{\vx} \bar{\vx}} = \begin{bmatrix}V_{\vx \vx} & V_{\vx \beta} \\ V_{\beta \vx} & V_{\beta \beta} \end{bmatrix}
$$
Expanding the right hand side of Bellman's equation \eqref{eq: bellemns equation in append} two the second order around a nominal trajectory $(\tilde{\vx}, \tilde{\vu})$ yields:
\begin{align*}
    %V_{k+1}(\delta \bar{\vx}_{k+1}) \approx 
    \approx  & (V_{\vx} F_{\vx}+ V_{\beta} F_{\vx}^{\beta} + L_{\vx}) \delta \vx + (V_{\beta} F_{\beta}^{\beta} +L_{\beta}) \delta \beta \\
    & + (V_{\vx} F_{\vu} + V_{\beta} F_{\vu}^{\beta} + L_{\vu}) \delta \vu \\
    & + \frac{1}{2} \delta \vx^{\text{T}} \big( F_{\vx}^{\text{T}} V_{\vx \vx} F_{\vx} + 2 F_{\vx}^{\text{T}} V_{\vx \beta} F_{\vx}^{\beta} + F_\vx^{\beta^{\text{T}}} V_{\beta \beta} F_{\vx}^{\beta} + L_{\vx \vx} \big) \delta \vx \\
    & + \frac{1}{2} \delta \beta^{\text{T}} \big( F_{\beta}^{\beta^{\text{T}}} V_{\beta \beta} F_{\beta}^{\beta} + L_{\beta \beta} \big) \delta \beta \\
    & + \delta \vx^{\text{T}} \big( F_{\vx}^{\text{T}} V_{\vx \beta} F_{\beta}^{\beta} + F_\vx^{\beta^{\text{T}}} V_{\beta \beta} F_{\beta}^{\beta} +L_{\vx \beta} \big) \delta \beta \\
    & + \frac{1}{2} \delta \vu^\text{T} \big( F_{\vu}^{\text{T}} V_{\vx \vx} F_{\vu} + 2 F_{\vu}^{\text{T}} V_{\vx \beta} F_{\vu}^{\beta} + F_{\vu}^{\beta^{\text{T}}} V_{\beta \beta} F_{\vu}^{\beta} + L_{\vu \vu} \big) \delta \vu \\
    & + \delta \vx^\text{T} \big(F_{\vx}^\text{T} V_{\vx \vx} F_{\vu} + F_{\vx}^\text{T} V_{\vx \beta} F_{\vu}^{\beta} + F_\vx^{\beta^{\text{T}}} V_{\beta \vx} F_{\vu} \\
    & \qquad \qquad + F_\vx^{\beta^{\text{T}}} V_{\beta \beta} F_{\vu}^{\beta} + L_{\vx \vu} \big) \delta \vu \\
    & + \delta \beta^\text{T} \big(F_{\beta}^{\beta^{\text{T}}} V_{\beta \vx} F_{\vu} + F_{\beta}^{\beta^{\text{T}}} V_{\beta \beta} F_{\vu}^{\beta} + L_{\beta \vu} \big) \delta \vu
\end{align*}
Define
\begin{align*}
    & Q_{\vx} :=  V_{\vx} F_{\vx}+ {\color{red}{V_{\beta} F_{\vx}^{\beta}}} + L_{\vx} \\
    & Q_{\beta} :=  V_{\beta} F_{\beta}^{\beta} +L_{\beta} \\
    & Q_{\vu} :=  V_{\vx} F_{\vu}+ {\color{red}{V_{\beta} F_{\vu}^{\beta}}} + L_{\vu} \\
    & Q_{\vx \vx} := F_{\vx}^{\text{T}} V_{\vx \vx} F_{\vx} + 2 F_{\vx}^{\text{T}} V_{\vx \beta} F_{\vx}^{\beta} + {\color{red}{F_\vx^{\beta^{\text{T}}} V_{\beta \beta} F_{\vx}^{\beta}}} + L_{\vx \vx} \\
    & Q_{\beta \beta} := F_{\beta}^{\beta^{\text{T}}} V_{\beta \beta} F_{\beta}^{\beta} + L_{\beta \beta}  \\
    & Q_{\vx \beta} := F_{\vx}^{\text{T}} V_{\vx \beta} F_{\beta}^{\beta} + F_\vx^{\beta^{\text{T}}} V_{\beta \beta} F_{\beta}^{\beta} +L_{\vx \beta} \\
    & Q_{\vu\vu} := F_{\vu}^{\text{T}} V_{\vx \vx} F_{\vu} + 2 F_{\vu}^{\text{T}} V_{\vx \beta} F_{\vu}^{\beta} + {\color{red}{F_{\vu}^{\beta^{\text{T}}} V_{\beta \beta} F_{\vu}^{\beta}}} + L_{\vu \vu} \\
    & Q_{\vx\vu} := F_{\vx}^\text{T} V_{\vx \vx} F_{\vu} + F_{\vx}^\text{T} V_{\vx \beta} F_{\vu}^{\beta} + F_\vx^{\beta^{\text{T}}} V_{\beta \vx} F_{\vu} + {\color{red}{F_\vx^{\beta^{\text{T}}} V_{\beta \beta} F_{\vu}^{\beta} }}+ L_{\vx \vu}\\
    & Q_{\beta \vu} := F_{\beta}^{\beta^{\text{T}}} V_{\beta \vx} F_{\vu} + F_{\beta}^{\beta^{\text{T}}} V_{\beta \beta} F_{\vu}^{\beta} + L_{\beta \vu}
\end{align*}

Hence, 
\begin{equation} 
    \frac{\partial V}{\partial \delta \vu} = Q_{\vu} + \delta \vu^\text{T} Q_{\vu\vu} + \delta \vx^\text{T} Q_{\vx\vu} + \delta \beta^\text{T} Q_{\beta \vu}
\end{equation}
and the optimal variation 
\begin{equation}
    \delta \vu^* = - Q_{\vu\vu}^{-1} \Big( Q_{\vu} + Q_{\vu \vx} \delta \vx + {\color{red}{Q_{\vu \beta}}} \delta \beta \Big)
\end{equation}
Substituting the optimal variation back yields the corresponding Riccati equations are
\begin{align}
    \begin{split}
        & V = - \frac{1}{2} Q_{\vu} Q_{\vu\vu}^{-1} Q_{\vu}^\text{T} \\
        & V_{\vx} = Q_{\vx} - Q_{\vu} Q_{\vu\vu}^{-1} Q_{\vu \vx} \\
        & V_{\beta} = Q_{\beta} - Q_{\vu} Q_{\vu\vu}^{-1} Q_{\vu \beta} \\
        & V_{\vx \vx} = Q_{\vx \vx} - Q_{\vx\vu} Q_{\vu\vu}^{-1} Q_{\vu \vx} \\
        & V_{\vx \beta} = Q_{\vx \beta} - Q_{\vx\vu} Q_{\vu\vu}^{-1} Q_{\vu \beta} \\
        & V_{\beta \beta} = Q_{\beta \beta} - Q_{\beta \vu} Q_{\vu\vu}^{-1} Q_{\vu \beta} 
    \end{split}
\end{align}
Note that the DBaS effects both the feed-forward and the feedback gains. 

Similarly, when using other feedback control methods such as the linear quadratic regulator (LQR) for the augmented model, the control law is a function of the barrier which supplies the control signal with safety information. The DBaS-DDP effect is discussed and is shown in the paper when used for importance sampling. To make the barrier state feedback impact on the solution evident and help the reader see it numerically, we illustrate its importance in a numerical example within an LQR problem. The idea of the examples here is to show a detailed numerical example of feedback control laws with barrier states. 

%\subsubsection{Linear Example}
\subsubsection{Quadrotor Optimal Regulation}
Consider the constrained optimal regulation problem:
\begin{align*} \begin{split}
    & \min_{\utraj}  \sum_{k=0}^{T\rightarrow \infty}  \big( 0.5 \vx^{\rT} \vx+ 0.1 \vu^{\rT} \vu \big)  \\
    \text{subject to } & \vx_{k+1} = F(k, \vx_k, \vu_k ) \\
     (x_k - 1)^2 & + (y_k-0.9)^2 + (z_k-1)^2 > 0.4^2 \ \forall k \in [0, T]
\end{split} \end{align*}
where $\vx \in \mathbb{R}^{12}$ is the quadrotor's state vector, $F$ is the discrete dynamics, $x$, $y$ and $z$ are the quadrotor's states position in the three dimensional space respectively. In essence, the problem is to find a control law that would regulate the quadrotor, $\vx \rightarrow \mathbf{0}$ where $\mathbf{0}$ is a vector of zeros of appropriate dimension, for any given initial condition while ensuring that $(x - 1)^2 + (y-0.9)^2 + (z-1)^2 > 0.4^2$ which can represent an obstacle in the three dimensional space. 

Define the safety condition to be $h(\vx_k) = (x_k - 1)^2 + (y_k-0.9)^2 + (z_k-1)^2 - 0.4^2 $, with an inverse barrier $\beta_k = \frac{1}{h(\vx_k)}$. Augmenting the barrier state and dynamics to the model of the system transforms the problem into the unconstrained optimal control problem
\begin{align*} \begin{split}
     \min_{\utraj}  \sum_{k=0}^{T\rightarrow \infty} & \big( 0.5 \vx^{\rT} \vx+ 0.1 \vu^{\rT} \vu + q_\beta \beta_k^2 \big)  \\
    \text{subject to } & \bar{\vx}_{k+1} = \bar{F}(k, \bar{\vx}_k, \vu_k )
\end{split} \end{align*}
Choosing $q_\beta = 10$, linearizing the nonlinear dynamics and solving the new LQR problem, yields the steady state control law $u_k = K_{\infty} \bar{\vx}_k = K_{\vx_{\infty}} \vx_k + {\color{red}{K_{\beta_{\infty}} \beta_k }}$ with 
$$
K_{\beta_{\infty}} = [-1.5514 \quad 0.9765 \quad -0.6901  \quad 2.2293]
$$
The system's states feedback gain matrix is removed due to space limitation. As shown in \autoref{fig: quadrotor lqr}, the barrier states embedded LQR solution (blue trajectory) safely regulate the quadrotor from the starting position (red ball) to the target position (green ball), i.e. avoiding the unsafe region (black sphere), unlike the unconstrained LQR solution (orange trajectory). 
%$$
%K_{\infty} = \begin{bmatrix} -0.5722  &  1.8582  &  0.4450 &  -0.1314  &  0.1931 &  -0.0482 &  -0.4509 &  -0.4153 &  -3.2767 &  -0.8063  & -0.7543 &  -2.8968  & {\color{red}{-1.5514}} \\
%23.2639  & -1.0638  & -0.2939  &  7.2147 &  -0.1216  & -0.0170 &    0.2760  &  4.2310  &  0.1314  &  0.4781  &  2.7122  &  0.3875 &   {\color{red}{0.9765}}  \\ 
%-0.7756 &  24.0961  &  0.4317  & -0.1216 &   7.2654 &   0.0036  & -4.3214  & -0.4394  & -0.1931  & -2.9609  & -0.7061  & -0.5906 &  {\color{red}{-0.6901}} \\ 
%-0.7914  & -0.7760 &    2.1282  & -0.0170  &  0.0036  &  3.2440 &   0.1618  & -0.0323  &  0.0482  &  0.3710  &  0.2577  &  0.3148  &  {\color{red}{2.2293}} \end{bmatrix}
%$$

\begin{figure}[h]
    \centering
    \includegraphics[trim={3cm 3cm 3cm 3cm},clip, width=0.7\linewidth]{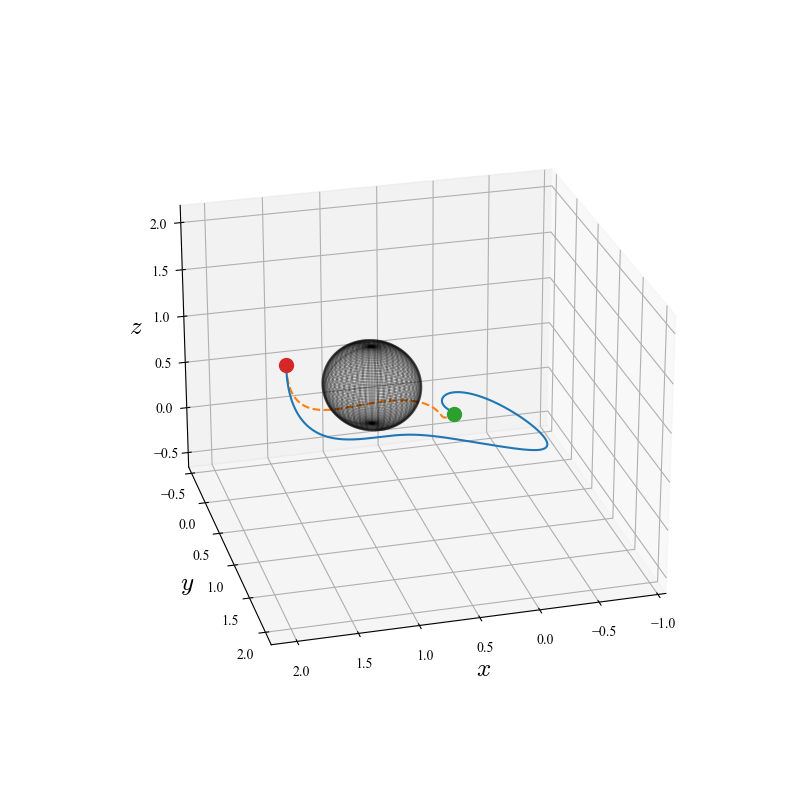} 
    \caption{Quadrotor under LQR control barrier state feedback (blue) and without barrier state feedback (orange), i.e. the unconstrained solution.}
    \label{fig: quadrotor lqr}
\end{figure}

\subsection{Dubins Vehicle Dynamics and Experiment Details} \label{App subsec: Dubins Dynamics}
The Dubins dynamics are given by
\begin{equation*}
    \dot{\vx} = \begin{bmatrix}
        \dot{x} \\ \dot{y} \\ \dot{\theta} 
    \end{bmatrix} = \begin{bmatrix}
        v \cos(\theta) \\ v \sin(\theta) \\ \omega
    \end{bmatrix}
\end{equation*}
where $x$ and $y$ are the vehicle's 2D coordinates, $\theta$ is the yaw (heading) angle, $v$ and $\omega$ are its linear and angular velocities respectively. Each obstacle is represented as an ellipsoid and the constraints to define the barrier states are defined as the distance between the obstacles and the vehicle, e.g. $h_1(\vx) = (x-o_x)^2 + (y-o_y)^2 - r^2 - r_{\text{v}}^2> 0$ where $o_x, \ o_y$ and $r$ are the obstacles center coordinates and radius accordingly and $r_{\text{v}}=0.2$ is the vehicles radius. For computational efficiency purposes, we define a single barrier state for all constraints as explained in \Cref{subsec: embedded barrier states}. The vehicle should start from the initial position $(x_0,y_0) = (0, 0)$ and navigate through the dense field to the target position $(x_f, y_f)= (10,10)$ in $10$ seconds with control limits of $(v_{\text{min}},v_{\text{max}})=(-0.1,10)$ and $(\omega_{\text{min}},\omega_{\text{max}})=(-0.1,10)$, which were set to allow for aggressive driving and not allow it to back up. The sampling time was $dt=0.01$, i.e. a problem horizon of $1000$, and a planning horizon of $50$ time steps, $0.5$ seconds. The cost parameters were selected to be $\lambda =10^{-3}, \alpha=0, \Sigma^{-1} = 300 I_{2 \times 2}, \ Q =\text{diag}(0.2, 0.2, 0.2), \ R= 0.5 \times 10^{-3} I_{2 \times 2}$, $\Phi =\text{diag}(5, 5, 0.1), \ R_{\text{fb}} = 0.5 \times 10^{-2}$ with a maximum number of iterations of $3$. For DBaS-DDP used within SC-MPPI, a quadratic cost was chosen with $Q =\text{diag}(0.1, 0.1, 0), \ q_{\beta} = 10^{-2}, \ R =\text{diag}(5 \times 10^{-3}, 5 \times 10^{-4}), \Phi=\text{diag}(0.02, 0.02, 0)$ with a maximum iterations of $20$.  

\subsection{Multirotor Dynamics and Experiment Details} \label{App subsec: quad control params}
We utilize the dynamics given below, where the inertial positions and velocities are represented by $x, y, z, v_x, v_y, v_z$, respectively, the attitude is represented by the quaternion $\vq = \begin{bmatrix} q_w & q_x & q_y & q_z \end{bmatrix}^\text{T}$, and the body rates of the vehicle are represented by $p, q, r$. Body rate dynamics are simplified in this system, represented via first order system with time constants $\kappa_p, \kappa_q, \kappa_r$. The controls of the system are the desired body rates and system thrust, given in vector $\vu = \begin{bmatrix} p_{\text{des}} & q_{\text{des}} & r_{\text{des}} & \tau \end{bmatrix}$.
\begin{align*}
    \begin{bmatrix} 
    \dot{x} \\ \dot{y} \\ \dot{z}
    \end{bmatrix} &=  
        \begin{bmatrix} 
    v_x \\ v_y \\ v_z
    \end{bmatrix}
\\
        \begin{bmatrix} 
    \dot{v}_x \\ \dot{v}_y \\ \dot{v}_z
    \end{bmatrix} &=
        \frac{1}{m} \mathcal{R}(\vq) \begin{bmatrix} 0 \\ 0 \\ \tau \end{bmatrix} - \begin{bmatrix} 0 \\ 0 \\ g \end{bmatrix}
        \\
        \dot{\vq} &= \frac{1}{2} \begin{bmatrix}
            -q_x & -q_y & -q_z \\ q_w & -q_z & q_y \\ q_z & q_w & -q_x \\ -q_y & q_x & q_w 
        \end{bmatrix} \begin{bmatrix} p \\ q \\ r
        \end{bmatrix}
        \\
            \begin{bmatrix} 
    \dot{p} \\ \dot{q} \\ \dot{r}
    \end{bmatrix} &=  
        \begin{bmatrix} 
    \frac{1}{\kappa_p} (p - p_{\text{des}}) \\ \frac{1}{\kappa_q} (q - p_{\text{des}}) \\ \frac{1}{\kappa_r} (r - r_{\text{des}})
    \end{bmatrix}
\end{align*}

The system's parameters were set to $m=1 \text{kg}$, $g=9.81 \frac{\text{m}}{\text{s}^2}$, $\kappa_p = 0.25$, $\kappa_q = 0.25$, $\kappa_r = 0.7$, and sampling time $dt=0.01$. The multirotor trials were randomized over 950 episodes, where the initial and target positions were varied under uniform box constraints. The vehicle should start from the initial position $(x_0,y_0,z_0) = (-1 \pm 1.5, -1 \pm 1.5, -4 \pm 0.1)$ and navigate through the dense field to the target position $(x_f, y_f, z_f)= (11 \pm 1.5, 11 \pm 1.5, 3 \pm 0.3)$ in $3$ seconds with control limits of $(p_{\text{min}},p_{\text{max}})=(-10,10) \frac{\text{rad}}{s}$,  $(q_{\text{min}},q_{\text{max}})=(-10,10)\frac{\text{rad}}{s}$,  $(r_{\text{min}},r_{\text{max}})=(-10,10)\frac{\text{rad}}{s}$,  $(\tau_{\text{min}},\tau_{\text{max}})=(0,45) N$. The sampling time was $dt=0.01$, i.e. a problem horizon of $300$, and a planning horizon of $75$ time steps, $0.75$ seconds.

\subsubsection{Experiment 1 Implementation Details}
For this experiment, we ran MPPI and SC-MPPI using the same parameters but with the difference being that MPPI has a barrier cost while SC-MPPI does not but instead its importance sampling is embedded with a safe feedback control effectively decoupling safety from MPPI performance optimization. The parameters chosen were as follows:
\begin{align*}
    \begin{split}
        & \lambda = 0.01, \ \alpha =0.7 \\
        & \Sigma^{-1} = \text{diag} ( 5, 5, 5, 15) \\ 
        & Q  = \text{diag} (1.50, 1.50, 1.70,10.5,  10.5,  10.5,1, 0, 0, 0, 1, 1, 1 ) \\ 
        & q_\beta = 10^{-6} \\
        & R  = \text{diag} (50 , 50 , 500 , 5 ) \\ 
        & \Phi  = \text{diag} (2500, 2500, 3000,9.5,  9.5,  1.000,0, 10, 10, 10,10, 10, 10) \\ 
        & R_{\text{bf}} = \text{diag} (20, 20, 20, 200) 
    \end{split}
\end{align*}
with a maximum number of iteration of $1$. For DBaS-DDP within SC-MPPI for the barrier state feedback, it had a maximum number of iterations of $5$ and the following cost parameters: 
\begin{align*}
    \begin{split}
        & Q  = \text{diag} (10, 10, 20, 10^{-2},  10^{-2},  10^{-1}, 0, 0, 0, 0,10^{-3}, 10^{-3}, 10^{-3}) \\
        & q_{\beta} = 10^{-4} \\
        & R  = \text{diag} (10, 10, 200.0, 1 ) \\ 
        & \Phi  = \text{diag} (10^{2}, 10^{2}, 10^{2},
                      1,  1,  1,
                      0, 0, 0, 0,
                      10^{-2}, 10^{-2}, 10^{-2}) \\ 
    \end{split}
\end{align*}
For \ac{DBaS} embedded \ac{MPC-DDP}, the following parameters were chosen, 
\begin{align*}
    \begin{split}
        & Q  = \text{diag} (2.5, 2.5, 50,
                           15,  15,  25,
                           1, 0, 0, 0,
                           3, 3, 3) \\
        & q_{\beta} = 15 \\
        & R  = \text{diag} (550, 550, 5500, 900) \\ 
        & \Phi  = \text{diag} (350, 350, 450,
                      15,  15,  100,
                      100, 0, 0, 0,
                      30, 30, 30) \\ 
    \end{split}
\end{align*}
with a maximum number of iteration of $25$.

\subsubsection{Experiment 2 Implementation Details}
MPPI and SC-MPPI parameters chosen were as follows:
\begin{align*}
    \begin{split}
        & \lambda = 0.01, \ \alpha_{\text{MPPI}} =0.7, \  \alpha_{\text{SC-MPPI}} =0.9 \\
        & \Sigma^{-1} = \text{diag} ( 150, 150, 50, 500) \\ 
        & Q_{\text{MPPI}}  = \text{diag} (1.50, 1.50, 1.70,10.5,  10.5,  10.5,1, 0, 0, 0, 1, 1, 1 ) \\ 
        & Q_{\text{SC-MPPI}}  = \text{diag} (150, 150, 170,10.5,  10.5,  10.5,1, 0, 0, 0, 1, 1, 1 ) \\ 
        & q_\beta = 10^{-4} \\
        & R  = \text{diag} (500 , 500 , 5000 , 50 ) \\ 
        & \Phi  = \text{diag} (2500, 2500, 3000,9.5,  9.5,  1.000,0, 10, 10, 10,10, 10, 10) \\ 
        & R_{\text{bf}} = \text{diag} (20, 20, 20, 200) 
    \end{split}
\end{align*}
with a maximum number of iteration of $1$. For DBaS-DDP within SC-MPPI for the barrier state feedback, it had a maximum number of iterations of $5$ and the following cost parameters: 
\begin{align*}
    \begin{split}
        & Q  = \text{diag} (10, 10, 20, 10^{-2},  10^{-2},  10^{-1}, 0, 0, 0, 0,10^{-3}, 10^{-3}, 10^{-3}) \\
        & q_{\beta} = 10^{-4} \\
        & R  = \text{diag} (10, 10, 200.0, 1 ) \\ 
        & \Phi  = \text{diag} (10^{2}, 10^{2}, 10^{2},
                      1,  1,  1,
                      0, 0, 0, 0,
                      10^{-2}, 10^{-2}, 10^{-2}) \\ 
    \end{split}
\end{align*}
For \ac{DBaS} embedded \ac{MPC-DDP}, the following parameters were chosen, 
\begin{align*}
    \begin{split}
        & Q  = \text{diag} (2.5, 2.5, 50,
                           1.5,  1.5,  2.5,
                           1, 0, 0, 0,
                           3, 3, 3) \\
        & q_{\beta} = 15 \\
        & R  = \text{diag} (550, 550, 5500, 900) \\ 
        & \Phi  = \text{diag} (3500, 3500, 4500,
                      15,  15,  100,
                      100, 0, 0, 0,
                      30, 30, 30) \\ 
    \end{split}
\end{align*}
with a maximum number of iteration of $25$.

%\subsubsection{Expermints details}

%\begin{table}[h!]
%\caption{Tuning params table.}
%\centering
%\begin{tabular}{|c|cc|cc|cc|}
%\hline
%\multirow{2}{*}{Quadrotor Experiment} & \ac{SC-MPPI} & \ac{MPPI} &  DDP \\ \cline{2-7} 
%                            & - & -   
%                            & - & -   
%                            & - & -   \\ 
%state running cost             & -    & -
%                            & -    & -
%                            & -    & -     \\
%control running cost            & -   & -       
%                            & -   & -      
%                            &   & -      \\
%control variance                      &    &  -     
%                            &   &  -     
%                            &   &  -     \\
%max iterations                      &    &  -
%                            &    &  -
%                            &   &  -     \\ \hline
%\end{tabular}
%\label{tab:timing}
%\end{table}

\end{document}